\def\DEBUG{1}
\setlist{topsep=0pt,itemsep=0pt,leftmargin=*}
\newtheorem{theorem}{Theorem}
\newtheorem{definition}[theorem]{Definition}
\newtheorem{lemma}[theorem]{Lemma}
\newcommand{\floor}[1]{{\left\lfloor#1\right\rfloor}}
\newcommand{\R}{{\mathbb{R}}}
\mathchardef\mhyphen="2D
\newcommand{\kmo}{\ensuremath{\mathsf{k\mhyphen Med\mhyphen O}}\xspace}
\newcommand{\cost}{\mathsf{cost}}
\newcommand{\poly}{\mathrm{poly}}
\newcommand{\opt}{\mathsf{opt}}
\newcommand{\bfp}{\mathbf{p}}
\DeclareMathOperator*\union{\bigcup}
	\newcommand{\sh}[1]{{\color{red} #1}}
	\def\rem#1{{\marginpar{\raggedright\scriptsize #1}}}
	\newcommand{\shr}[1]{\rem{\small\textcolor{red}{$\bullet${\tiny Shi: #1\\}}}}
	\newcommand{\janr}[1]{\rem{\small\textcolor{brown}{$\bullet${\tiny Janard: #1\\}}}}
	\newcommand{\xguor}[1]{\rem{\small\textcolor{blue}{$\bullet${\tiny Xiangyu: #1\\}}}}
	\newcommand{\remove}[1]{{\color{lightgray} #1}}
	\newcommand{\sh}[1]{#1}
	\newcommand{\shr}[1]{}
	\newcommand{\janr}[1]{}
	\newcommand{\xguor}[1]{}
	\newcommand{\jaiyir}[1]{}
	\newcommand{\remove}[1]{}
\newcommand{\settitle}{\@maketitle}
\begin{document}


\title{Consistent $k$-Median: Simpler, Better and Robust}

\author{ 
		Xiangyu Guo \thanks{Department of Computer Science and Engineering, University at Buffalo, {\tt xiangyug@buffalo.edu }} \and
		Janardhan Kulkarni \thanks{The Algorithms Group, Microsoft Research, Redmond, {\tt jakul@microsoft.com}} \and
		Shi Li \thanks{Department of Computer Science and Engineering, University at Buffalo, {\tt shil@buffalo.edu}} \and
		Jiayi Xian \thanks{Department of Computer Science and Engineering, University at Buffalo, {\tt jxian@buffalo.edu}}
	   }

\date{}

\maketitle
	\begin{abstract}
	In this paper we introduce and study the online consistent $k$-clustering with outliers problem, generalizing the non-outlier version of the problem studied in Lattanzi-Vassilvitskii \cite{LattanziV17}. 
	We show that a simple local-search based online algorithm can give a bicriteria constant approximation for the problem with $O(k^2 \log^2 (nD))$ swaps of medians (recourse)  in total, where $D$ is the diameter of the metric. When restricted to the problem without outliers, our algorithm is simpler, deterministic and gives better approximation ratio and recourse, compared to that of Lattanzi-Vassilvitskii \cite{LattanziV17}.
	
%
\end{abstract}	
\section{Introduction}
\label{sec:intro}

Clustering is one of the most fundamental primitives in unsupervised machine learning, and  $k$-median clustering is one of the most widely used primitives in practice.
Input to the problem consists of a set $C$ of $n$ points, a set $F$ of potential median locations, a metric space $d: (C \cup F) \times (C \cup F)  \rightarrow \R_{\geq0}$. The goal is to choose a subset $S \subseteq F$ of cardinality at most $k$ so as to minimize $\sum_{j \in C} d(j, S)$ where $d(j,S) := \min_{i \in S} d(j,i)$ is the distance from $j$ to its nearest chosen median. The problem is known to be NP-hard and several constant factor approximation algorithms are known to the problem \cite{CGST99, JV99, AGKMMP01, LS13, BPRST17}. 

In many real world applications, the set of data points arrive over time in an {\em online} fashion. 
For example, images, videos, documents get added over time, and clustering algorithms in such applications need to assign a label (or a median) to each newly added point in an online fashion.
A natural framework to study these online clustering problems is using {\em competitive analysis}, where the goal is to assign each arriving data point  {\em irrevocably} to an existing cluster or start a new cluster containing the point. 
Unfortunately, the competitive analysis framework is too strong, and it is provably impossible to maintain a good quality clustering of data points if one insists on the irrevocable decisions \cite{liberty2016algorithm}.
Recently, Lattanzi and Vassilvitskii \cite{LattanziV17} observed that in many applications the decisions need not be irrevocable, however the online algorithm should not do too many {\em re-clustering} operations.
Motivated by such settings they initiated the study of {\em consistent $k$-clustering} problem.
The goal in consistent $k$-clustering is twofold:  
\begin{itemize}
\item \textbf{Quality}: Guarantee at all the times that we have a clustering of the points that is a good approximation to  the optimum one.
\item \textbf{Consistency:} The chosen medians should be stable and not change too frequently over the sequence of data point insertions.
\end{itemize}

Lattanzi and Vassilvitskii \cite{LattanziV17} measured the number of changes to the set of chosen medians using the notion of \emph{recourse} -- a concept also studied in online algorithms \cite{Gupta015, GuptaKS14, BernsteinHR19}. 
The total recourse of an online algorithm is defined as the number of changes it makes to the solution. 
Specially for the $k$-median problem, if $S_t$ corresponds to the set of chosen medians at time $t$ and $S_{t+1}$ at time $t+1$, then the recourse at time step $t+1$ is $|S_{t+1} \setminus S_t |$. \footnote{One can also define the recourse as $|S_{t+1} \setminus S_t| + |S_t \setminus S_{t+1}|$, but if we assume $|S_t| = |S_{t+1}| = k$, this is exactly $2\cdot|S_{t+1} \setminus S_t |$.}
The total recourse of an online algorithm is the sum of recourse across all the time steps.
An online algorithm with small recourse ensures that the chosen medians do not change too frequently and hence is consistent.
In particular, it forbids an algorithm from simply recomputing the solution from scratch at each time step.
This is a very desirable property of a clustering algorithm in applications, as we do not want to change the label assigned to data points (which corresponds to cluster centers) as the data set keeps growing. 
Broadly speaking, recourse is also a measure of {\em stability} of an online algorithm.
Lattanzi and Vassilvitskii \cite{LattanziV17} showed that one can maintain an $O(1)$ approximation to the $k$-median problem with $O(k^2 \log ^4 n)$ total recourse.
More recently, Cohen-Addad {\em et al} \cite{Cohen-AddadHPSS19} studied facility location (and clustering problems) from the 
perspective of both dynamic  and consistent clustering frameworks. 
See related work section for more details. 

A drawback of using $k$-median clustering on real-world data sets is that it is not robust to noisy data, i.e., a few outliers can completely change the cost as well as structure of solutions.  
Recognizing this shortcoming, Charikar et al.  \cite{CKMN01} introduced a {\em robust} version of $k$-median problem called {\em $k$-median with outliers}. 
The problem is similar to $k$-median problem except one crucial difference:
An algorithm for $k$-median with outliers does not need to cluster all the points but can choose to ignore a small fraction of the input points.
The number of points an algorithm can ignore is given as a part of the input, and is typically set to be a small fraction of the overall input.  

Formally, in the $k$-median with outliers (\kmo) problem, we are given $F$, $C$, $d$ and $k$ as in the $k$-median problem. Additionally, we are given an integer $z \leq n = |C|$. The goal is to choose a set $S \subseteq F$ of $k$ medians, so as to minimize 
\begin{align*}
	\textstyle \min_{O \subseteq C: |O| = z} \sum_{j \in C \setminus O} d(j, S).
\end{align*}
 The set $O$ of points are called \emph{outliers} and are not counted in the cost of the solution $S$. Thus the parameter $z$ specifies the number of outliers.  
Notice that when $S$ is given, the set $O$ that minimizes $\sum_{j \in C \setminus O} d(j, S)$ can be computed easily: It contains the $z$ points $j \in C$ with the largest $d(j, S)$ value.  Therefore for convenience we shall simply use a set $S \subseteq F$ of size $k$ to denote a solution to a \kmo instance. The \kmo problem is not only a more robust objective function but also helps in removing outliers -- a very important issue in the real world datasets \cite{OPRC14, ChawlaGionis13}. 
In fact such a joint view of clustering and outlier elimination has been observed to be more effective, and has attracted significant attention both in theory and practice \cite{Chen08, ChawlaGionis13, GKLMV17, rujeerapaiboon2019size, KLS18}. 

In this paper, we study the \kmo problem in the online {\em consistent $k$-clustering} framework of  Lattanzi and Vassilvitskii.
The goal is to maintain a good quality (approximate) solution to the problem at all times while minimizing the total recourse of the online algorithm. (The total recourse is still defined as $\sum_{t} |S_t \setminus S_{t-1}|$.)
Though $O(1)$-approximation algorithms for \kmo are known in the offline setting \cite{Chen08, KLS18},  it seems hard to extend these algorithms to the online setting. Instead, we resort to \emph{bicrtieria approximate solutions} for the \kmo problem:
\begin{definition}
	We say a solution $S \subseteq F$ of $k$ medians is a $(\beta, \alpha)$-bicriteria approximation to the $k$-median with outliers instance $(F, C, d, k, z)$ for some $\alpha, \beta \geq 1$, if there exists a set $O \subseteq C$ of size at most $\beta z$ such that $\sum_{j \in C \setminus O} d(j, S) \leq \alpha \cdot \opt$, where $\opt$ is the cost of the optimum solution for the instance with $z$ outliers. 
\end{definition}
So, a $(\beta, \alpha)$-approximate solution removes at most $\beta z$ outliers and has cost at most $\alpha$ times the cost of the optimum solution with $z$ outliers.  

\textbf{Online Model for $k$-Median with Outliers}\ \  We now describe the online model for the \kmo problem. Recall that a \kmo instance is given by $F, C, d, k$ and $z$. As in \cite{LattanziV17}, we assume $k$ is given at the beginning of the algorithm, and $C$ and $d$ will be given online.  We use $n$ to denote the total number of  clients that will arrive.  

Depending how $F$ is given, we have two slightly different online settings:
\begin{itemize}
	\item In the \emph{static $F$} setting,  we assume $F$ is independent of  $C$ and is given at the beginning of the online algorithm. In each time step, one point in $C$ arrives and its distances to $F$ are revealed. \footnote{It is easy to see that in the \kmo problem, only distances between $F$ and $C$ are relevant.} 
	\item In the $F = C$ setting, we assume we always have $F = C$.  Whenever a point arrives, its distances to previously arrived points are revealed, and  the point is then added to both $C$ and $F$.  
\end{itemize}
 The $F = C$ setting is more natural for clustering applications and is the one used in \cite{LattanziV17}.   On the other hand, the static $F$ setting arises in applications where we want to build $k$ facilities to serve a set $C$ of clients that arrive one by one. In these applications, the set $F$ of potential locations to build facilities is independent of $C$ and often does not change over time.   The analysis of our algorithm works directly for the static $F$ setting, but needs a small twisting in the $F = C$ setting. 
 
It remains to describe how $z$ is given.  For simplicity, we assume $z$ is fixed and given at the beginning of the algorithm; we call this the static $z$ setting. In a typical application, $z$ may increase as more and more points arrive, and we call this setting the incremental $z$ setting. 
We can reduce the incremental $z$ setting to the static $z$ setting in the following way. We maintain an integer $z' \in [z, (1+\epsilon) z)$ and use $z'$ as the given number of outliers.  This will incur a factor of $(1+\epsilon)$ in the first factor of the bicriteria approximation.  During our algorithm, whenever $z$ becomes more than $z'$, we update $z'$ to $\floor{(1+\epsilon)z}$. We define an \emph{epoch} to be a maximal period of time steps with the same $z'$ value. So within an epoch, $z'$ value does not change.  The number of epochs is at most $O(\log_{1+\epsilon} n) = O\Big(\frac{\log n}{\epsilon}\Big)$. Thus, if we have an online $(\beta, \alpha)$-approximation algorithm for $\kmo$ with total recourse $R$ in the static $z$ setting, we can obtain an $((1+\epsilon)\beta, \alpha)$-approximation algorithm with total recourse $O\Big(\frac{R\log n}{\epsilon}\Big)$ in the incremental $z$ setting.  Thus throughout the paper, we only focus on the static $z$ setting, that is, $z$ is fixed and given at the beginning of the algorithm.

\textbf{Our Results}\ \  The main contribution of the paper is the following. Recall that $n$ is the total number of points that will arrive during the whole algorithm. We assume all distances are integers and define $D$ to be the diameter of the metric $d$.
\begin{theorem}
\label{thm:mainoutlier}
	There is a deterministic $(O(1), O(1))$-bicriteria approximation algorithm for the online $k$-median with outliers problem with a total recourse of $O\big(k^2\log n \log (nD)\big)$.
\end{theorem}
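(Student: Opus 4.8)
The plan is to maintain, by \emph{local search}, a set $S$ of $k$ medians that is \emph{approximately locally optimal}: no single swap — delete some $i \in S$, insert some $i' \in F$ — reduces the bicriteria cost $\cost_{\beta z}(S) := \min_{O \subseteq C,\, |O| \le \beta z} \sum_{j \in C \setminus O} d(j, S)$ by more than a $(1-\epsilon/k)$ factor, for a small constant $\epsilon$ and a large enough constant $\beta$. In parallel we keep a lower-bound estimate $\LB$ on the current optimum $\OPT_t$, always a power of two, and we only ever increase $\LB$, and only when we hold a certificate that $\OPT_t$ has grown past it. This partitions the run into $O(\log(nD))$ \emph{phases}, one per value of $\LB$. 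When a client arrives we append it to $C$ and then repeatedly perform any improving swap until local optimality is restored; if this leaves $\cost_{\beta z}(S) > \gamma\LB$ (with $\gamma$ the approximation constant from the quality analysis below), we raise $\LB$ — possibly several doublings in a row — which ends the phase. No restart of $S$ is needed across phases, since the swap rule does not refer to $\LB$.

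\textbf{Quality.} I would show any approximately locally optimal $S$ is an $(O(1), O(1))$-bicriteria approximation — the outlier analogue of the classical $k$-median local-search analysis of Arya et al. Fix an optimum $S^*$ with outlier set $O^*$, $|O^*| = z$, and let $O$ be the $\beta z$ clients farthest from $S$. Build a mapping from $S^*$ to $S$ sending each optimal center to its nearest center in $S$, and, following Arya et al., group the test swaps so that every $i \in S$ is deleted in only $O(1)$ test swaps and no test swap that deletes $i$ needs to reroute a client currently assigned to $i$. For each test swap $(i, i^*)$, approximate local optimality gives $\cost_{\beta z}(S) \le \cost_{\beta z}(S - i + i^*) + \tfrac{\epsilon}{k}\cost_{\beta z}(S)$, and we bound the right-hand side by an explicit rerouting (clients $S^*$-assigned to $i^*$ go to $i^*$; the rest go through the mapping). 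Summing over all test swaps and applying the triangle inequality yields $\cost_{\beta z}(S) = O(\OPT_t)$, provided the $\le z$ clients that $S$ serves but $S^*$ discards and the $\le \beta z$ clients that $S$ discards but $S^*$ serves are absorbed; this is exactly what the bicriteria slack $\beta > 1$ buys, so a suitably large constant $\beta$ suffices. The same lemma supplies the certificate used above: if local search gets stuck with $\cost_{\beta z}(S) > \gamma\LB$, then $\OPT_t > \LB$, so raising $\LB$ is safe.

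\textbf{Recourse.} The key quantity is $\ln \cost_{\beta z}(S)$, which is nonnegative since distances are integers, and which is at most $\ln(\gamma\LB)$ immediately after a recovery. Each improving swap decreases it by $\Omega(\epsilon/k)$, while each client arrival increases it. Within a phase $\LB$ is fixed, so after each recovery the value lies in a window of width $O(\log n)$ above $\ln\LB$: a far-away arriving client is simply made an outlier using the $\beta z$ budget rather than driving the cost up, and any genuine multiplicative blow-up of $\cost_{\beta z}(S)$ would certify a jump in $\OPT_t$ and end the phase. Amortizing the per-swap decrease against the client arrivals and against the phase's opening slack gives $O(k^2\log n)$ swaps inside a phase (the extra factor of $k$ over the naive $O(k\log n)$ coming from having to re-converge the local search after arrivals and from the outlier bookkeeping), and summing over the $O(\log(nD))$ phases yields total recourse $O(k^2\log n\log(nD))$.

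\textbf{Main obstacle.} The hard part is the recourse amortization: a single client arrival can, in principle, spike $\cost_{\beta z}(S)$ by a large multiplicative factor, and one must argue that such spikes are rare — either they certify an $\OPT$ jump (ending the phase), or the spiking client can itself be declared an outlier — and that the swaps triggered by ordinary arrivals telescope correctly instead of being re-counted from scratch at every step; the outlier slack is again what makes this possible. A secondary technical point is the $F = C$ setting: a newly arrived point is simultaneously a new candidate median, so $F$ is not known in advance; this is harmless because the quality analysis only ever uses candidate medians among the points already revealed (in $F = C$ the optimal centers are themselves clients), and the recourse argument is unaffected.
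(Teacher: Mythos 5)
Your proposal has two real gaps, and the first one is exactly the obstruction the paper's technique is designed to route around.

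\textbf{Quality.} You run local search directly on the truncated cost $\cost_{\beta z}(S) = \min_{|O|\le \beta z}\sum_{j\in C\setminus O} d(j,S)$ and assert that the Arya-et-al.\ test-swap analysis carries over because ``the bicriteria slack $\beta>1$'' absorbs the outlier mismatches. That is precisely where the argument does not close. The truncated cost is not separable over clients: which clients count depends on all the others, so after a test swap the outlier set shifts and the per-client cancellations you need do not line up. A client served by $i^*$ in $S^*$ but already discarded in $S$ yields no gain from reconnecting to $i^*$ (its contribution is $0$ on both sides), while a client currently served by $S$ but an outlier of $S^*$ must be rerouted somewhere at a cost as large as the diameter, with no term to charge it against. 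A constant $\beta$ does not soak this up; this is why the prior local-search bicriteria of Gupta et al.\ (which the paper cites as the best previous local-search result for \kmo) incurs a factor $O(k\log(nD))$ in the number of outliers, not $O(1)$. The paper's key idea is to replace $\cost_{\beta z}$ by the \emph{penalty} cost $\cost_p(S)=\sum_{j\in C}\min\{d(j,S),p\}$, which is separable over clients, prove a clean locality-gap statement for it (Theorem~\ref{thm:local-search-apx-ratio}), and adaptively double $p$ exactly when too many points are at distance $\ge p$ from $S$; the $(O(1),O(1))$ bicriteria then falls out of Claim~\ref{claim:ofkm-bound-ratio} and Lemma~\ref{lemma:bound-p}. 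Your proposal omits this reformulation entirely, which is the central technical contribution.

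\textbf{Recourse.} Your potential argument claims that within a phase $\ln\cost_{\beta z}(S)$ stays in a window of width $O(\log n)$ above $\ln\LB$ because far-away arrivals are ``simply made outliers.'' This is not established: local search need not declare an arriving far point an outlier (moving a median to it may be the cheaper option), a single arrival can raise the cost by up to $D$ without certifying an $\OPT$ jump large enough to end the phase, and the outlier budget is a fixed multiple of $z$ so it cannot absorb an unbounded stream of far arrivals. The paper does not attempt to bound the per-arrival increase at all; it proves the much stronger statement (Lemma~\ref{lemma:km-bound-delta}) that the \emph{cumulative} increase $\sum_{t\le T}\Delta_t$ within a stage is $O(k\log n)\cdot\opt'_T$, by fixing an optimal median $i^*$ and running a harmonic-sum argument over the clients assigned to it in order of arrival — a proof that leans on the separability of $\cost_p$. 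Combining this with the fact that each swap drops $\cost_p$ by at least $\epsilon\opt'_t/k$ and the algebraic Lemma~\ref{lemma:helper-sum-b/a} is what yields the $O(k^2\log n\log(nD)/\epsilon)$ total. Your $O(\log(nD))$-phase doubling scheme is the right skeleton, but without a lemma controlling the sum of the increments the amortization does not telescope.
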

When restricted to the case without outliers (i.e, $z = 0$), our algorithm gives the following.
\begin{theorem}
\label{thm:kmedian}
There is a deterministic $O(1)$-approximation algorithm to the consistent $k$-median problem with $O\left(k^2 \log n \log (nD)\right)$ total recourse.
\end{theorem}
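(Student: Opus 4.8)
The plan is to run \emph{local search} on the current client set, but lazily: touch the solution only when its cost becomes provably too large, and maintain an explicit, monotone lower bound $\LB$ on $\OPT$. Fix a small constant $\epsilon>0$, let $\rho=5+O(\epsilon)$ be the approximation factor of a $(1-\epsilon/k)$\emph{-approximate local optimum} under single swaps (Arya et al.~\cite{AGKMMP01}), and set $\alpha=2\rho$. We keep a set $S$ of $k$ centers and a value $\LB$ with the invariants $\LB\le\OPT_t$ and $\cost(S)\le\alpha\LB$ after every time step $t$ (the case $\OPT=0$, i.e.\ at most $k$ distinct client locations, is trivial, so we may start from $\LB=1$). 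When a client $j$ arrives we first add it to the clustering of $S$; then, while $\cost(S)>\alpha\LB$, if some single swap $(u,v)\in S\times F$ satisfies $\cost(S-u+v)\le(1-\epsilon/k)\cost(S)$ we perform it, and otherwise we set $\LB\leftarrow\cost(S)/\rho$. This update is valid because $S$ is then a $(1-\epsilon/k)$-approximate local optimum, hence $\cost(S)\le\rho\,\OPT_t$, so $\cost(S)/\rho\le\OPT_t$; moreover the new value is at least $\alpha\LB_{\mathrm{old}}/\rho=2\LB_{\mathrm{old}}$, and afterwards $\cost(S)\le\alpha\LB_{\mathrm{new}}=2\cost(S)$, so both invariants are restored. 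The approximation guarantee is then immediate: after each step $\cost(S)\le\alpha\LB\le\alpha\,\OPT_t=O(1)\cdot\OPT_t$. (In the $F=C$ setting a newly arrived point may itself be an attractive center; the small twist alluded to in the introduction is to also allow inserting $j$ into $S$, which does not affect the analysis below.)

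\textbf{Reducing the recourse to one estimate.} Each update at least doubles $\LB$, and $1\le\LB\le\OPT\le nD$, so $\LB$ takes only $O(\log(nD))$ values; call a maximal run of steps on which $\LB$ is constant a \emph{phase}. It suffices to bound the number of swaps inside one phase by $O(k^2\log n)$. Within a phase $\cost(S)\ge\OPT_t\ge\LB$ at all times, so a performed swap decreases $\cost(S)$ \emph{additively} by at least $\tfrac{\epsilon}{k}\LB$. Since $\cost(S)$ goes up only when a client $j_t$ arrives (by its connection cost $d(j_t,S_{t-1})$) and down only at swaps, and it starts the phase at $\le\alpha\LB$ and never drops below $0$, the total decrease contributed by swaps is at most $\alpha\LB+\sum_t d(j_t,S_{t-1})$ (sum over clients arriving in the phase). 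Hence the number of swaps in the phase is at most $\tfrac{k}{\epsilon\LB}\big(\alpha\LB+\sum_t d(j_t,S_{t-1})\big)$, and everything reduces to the single estimate
\[ \sum_{t\in\mathrm{phase}} d(j_t,S_{t-1})\;=\;O\big(k\,\LB\log n\big). \]

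\textbf{The crux.} Proving this inequality is where the real work lies. The intended argument: fix the optimum solution $\OPT^\star$ for the clients of the phase (its cost is $O(\LB)$ since $\OPT_t\le\cost(S)\le\alpha\LB$ at rest), and assign each phase-client $j_t$ to its nearest center $o(j_t)\in\OPT^\star$. By the triangle inequality $d(j_t,S_{t-1})\le d(j_t,o(j_t))+d(o(j_t),S_{t-1})$, and $\sum_t d(j_t,o(j_t))\le\cost(\OPT^\star)=O(\LB)$, so it remains to control $\sum_{o\in\OPT^\star}\sum_{t:\,o(j_t)=o}d(o,S_{t-1})$. The key structural fact is that once \emph{some} client assigned to $o$ has already arrived, the at-rest invariant $\cost(S)\le\alpha\LB$ forces $S$ to stay within $O(\LB)$ of $o$ whenever the algorithm is idle; thus $d(o,S_{t-1})$ can be large only (a) for the very first client that ever lands near $o$, or (b) during a transient swap round triggered by some other arrival. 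For (a), a packing argument shows there are only $O(k)$ such first-in-a-cluster clients per phase — more than $O(k)$ mutually far clients that are also far from the current coverage of $S$ would force $\OPT>\alpha\LB$ and end the phase — and their (potentially large) contributions are charged across phases using the fact that such a far arrival makes $\LB$ jump by a large factor in one step. For (b), one bounds the cumulative ``wandering'' of $S$ away from the $k$ optimum centers by an $O(\log n)$ factor via an amortized potential over those clusters. Carrying out this charging cleanly — and checking that local-search swaps never strand an already well-served client, so that ``$\cost(S)\le\alpha\LB$ at rest'' is genuinely re-established after every arrival — is the main obstacle; the rest is bookkeeping, and the $O(\log(nD))$ phases times $O(k^2\log n)$ swaps per phase give the claimed total recourse $O(k^2\log n\log(nD))$.
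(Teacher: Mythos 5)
Your overall framework — lazily running local search with a multiplicative stopping criterion, maintaining a monotone progress parameter that defines $O(\log(nD))$ phases, and then bounding the swaps per phase by cost‐increase divided by per‐swap cost‐decrease — is the same high‐level plan the paper follows (in the paper the role of $\LB$ is played by a penalty threshold $p$ that doubles when the solution becomes infeasible, and swaps that fail to decrease $\cost_p(S)$ by $\epsilon\,\cost_p(S)/k$ are disallowed). So the setup is sound. The real work, as you yourself flag, is the estimate
\[
\sum_{t\in\mathrm{phase}} d(j_t,S_{t-1})\;=\;O\bigl(k\,\LB\log n\bigr),
\]
and this is exactly where your argument has a genuine gap.

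The ``key structural fact'' you state is that once one client assigned to an optimal center $o$ has arrived, the at‐rest invariant $\cost(S)\le\alpha\LB$ keeps $d(o,S)=O(\LB)$. That is correct, but it is not strong enough: summing $O(\LB)$ over all subsequent clients of the cluster yields $O(|\sigma^{*-1}(o)|\cdot\LB)$, and over all clusters this is $O(n\,\LB)$, not $O(k\,\LB\log n)$. You gesture at recovering the missing $O(\log n)$ via ``an amortized potential over those clusters,'' but no such potential is exhibited, and I do not see how the $O(\LB)$ pinning alone can generate it. The quantitative refinement the paper actually uses (Lemma~\ref{lemma:km-bound-delta}) is sharper: if $r-1$ clients $j'_1,\dots,j'_{r-1}$ assigned to $o$ have already arrived, then $\sum_{u<r}\bigl(d(j'_u,S)+d(j'_u,o)\bigr)=O(\opt)$ — the first sum because $\cost(S)=O(\opt)$ at rest, the second because it is part of the optimal cost — so by averaging \emph{some} $j'_u$ has both terms $O(\opt/(r-1))$, and the triangle inequality gives $d(o,S)=O(\opt/(r-1))$, not merely $O(\opt)$. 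This ``more witnesses pin $S$ tighter'' effect is precisely what turns the per‐cluster sum into a harmonic series $\sum_r O(\opt/(r-1)) = O(\opt\log n)$, which after summing over the $k$ clusters produces the $O(k\log n)$ factor.

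There is a second, smaller gap: in your algorithm the first‐client term $d(j_t,S_{t-1})$ (your case (a)) is unbounded — it can be as large as the diameter $D$ even when $\OPT_t$ is tiny (in particular in the $F=C$ setting, a newly arrived far‐away point may itself be a cheap center), and your bound on the number of swaps in a phase is additive in $\sum_t d(j_t,S_{t-1})/\LB$, so one such arrival can blow up the per‐phase count. You propose charging this across phases because $\LB$ jumps, but the jump in $\LB$ happens \emph{after} those swaps have been performed and does not retroactively pay for them; the bookkeeping you need is not supplied. The paper sidesteps this by running local search on the truncated metric $d_p=\min\{d,p\}$ with $p=O(\opt')$, so each arrival increases the working cost by at most $p=O(\opt')$; that cap is what makes the $r=1$ term harmless. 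Some analogue of that cap (or a genuinely different charging scheme) is needed to make your case (a) go through.
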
 

The recourse achieved by our algorithm is $O(\log ^2 n)$ factor better than the result of Lattanzi and Vassilvitskii \cite{LattanziV17}. \footnote{In \cite{LattanziV17}, it is assumed that $D = \poly(n)$ and thus $O(\log(nD)) = O(\log n)$.}
They also showed a lowerbound of $\Omega(k \log n)$ on the total recourse, hence our result also takes a step towards achieving the optimal recourse for this basic problem.  

Lemma \ref{lemma:okm-main} that appears later gives a formal statement of the guarantees obtained by our algorithm.
In Lemma \ref{lemma:okm-main} we prove a more general result, where one can trade-off running time and the approximation factor achieved by our algorithm by fine-tuning certain parameters.  In particular, by appropriate tuning of parameters we can achieve $3+\epsilon$ approximation in time $n^{O(1/\epsilon)}$, matching the approximation factor achieved by local search algorithm in the offline setting, and also improves the unspecified $O(1)$ factor achieved by \cite{LattanziV17}.
Finally, our algorithm is deterministic while that of \cite{LattanziV17} is randomized and only succeeds with high probability.

\textbf{Our Techniques}\ \ Unlike many of the previous results on the online $k$-median problem and the related facility location problem, which are based on Meyerson's sampling procedure \cite{Meyerson2001}, our approach is based on {\em local search}. When restricted to the $k$-median without outliers problem, at every time step, it repeatedly applies \emph{$\rho$-efficient} swap operations until no such operations exist: These are the swaps that can greatly decrease the cost of the solution (See Definition~\ref{def:efficient-op}). Via standard analysis, one can show that this gives an $O(1)$-approximation for the problem. To analyze the total recourse of the algorithm, we establish a crucial lemma that the total cost increment due to the arrival of clients is small.  Compared to Meyerson's sampling technique, local search has two advantages:  (i) The approximation ratio can be made to be $3+\epsilon$, which matches the best offline approximation ratio for $k$-median based on local search. (ii) Local-search based algorithms are deterministic in general.

Very recently, similar techniques were used in \cite{GKLX20} to derive online algorithms for the related facility location problem.  We extend their ideas to the $k$-median problem, and more importantly, the $k$-median with outliers problem. 

One barrier to extend the algorithm to the outlier setting is that the analysis for the local search algorithm breaks down if we impose the constraint that the number of outliers can be at most $z$. To circumvent the barrier, we handle the constraint in a soft manner: We introduce a penalty cost $p$, and instead of requiring the number of outliers to be at most $z$, we pay a cost of $p$ for every outlier in the solution. By setting $p$ appropriately, we can ensure that the algorithm does not produce too many outliers, while at the same time maintaining the $O(1)$ approximation ratio.   Indeed, in the offline setting, our algorithm gives the first $(O(1), O(1))$-bicritiera approximation for \kmo based on local search. Prior to our work, in the offline setting, Gupta et al \cite{GKLMV17} developed a bicriteria approximation for the problem, but it needs to violate the outlier constraint by a factor of $O(k\log (nD))$. On the other hand, though $O(1)$-approximation algorithms for \kmo were developed in \cite{Chen08} and \cite{KLS18}, unlike our local search based algorithm, they are hard to extend to the online setting.

\textbf{Other Clustering Objectives}\ \  We remark that our algorithm and analysis can be easily extended to the $k$-means objective, and more generally, the sum of $q$-th power of distances for any constant $q \geq 1$. However for the cleanness of presentation, we choose to only focus on the $k$-median objective.  

\textbf{Related work}\ \  As we mentioned earlier, Cohen-Addad {\em et al} \cite{Cohen-AddadHPSS19} studied facility location and clustering problems from the perspective of both dynamic  and consistent clustering frameworks. 
In the dynamic setting, data points are both added and deleted, and the emphasis is to maintain good quality solutions while minimizing the  {\em time} it takes to update the solutions.
For the facility location problem, they gave an $O(1)$ approximation algorithm with almost optimal $O(n)$ {\em total} recourse and $O(n \log n)$ {\em per step} update time. 
They also extended their algorithm for facility location to the $k$-median and $k$-means problems (without outliers), achieving a constant factor approximate solution with $\tilde{O}(n + k^2)$ {\em per step} update time.
Unfortunately, they do not state the total recourse of their algorithms.
To our understanding, the total recourse of their algorithms can be as large as $O(n)$. 
However, they also consider a {\em harder} setting where data points are being both inserted and deleted.
We believe that finding a consistent $k$-clustering algorithm, where the emphasis is more on the stability of cluster centers than the update time, for the case when data points are inserted and deleted is an important open problem. 

For more details regarding clustering problems in the context of dynamic and online algorithms, we refer the readers to \cite{Charikar1997, Meyerson2001, Fotakis2011, Goranci19, GKLX20} and references therein.

{\em All the omitted proofs are given in the supplementary material.}

	\section{An Offline Local Search Algorithm for $k$-Median with Outliers}
\label{sec:okm}

In this section, we describe an offline local search algorithm for \kmo that achieves an $(O(1), O(1))$-bicriteria-approximation ratio.   To allow trade-offs among the approximation ratio, number of outliers and running time, we introduce two parameters: an integer $\ell \geq 1$ and a real number $\gamma > 0$.  The algorithm gives $\Big(\big(1+\frac1\ell)(1+\gamma), (3+\frac2\ell)\big(1+\frac1\gamma\big)\Big)$-bicriteria approximation in $n^{O(\ell)}$ time.   In particular, we can set $\ell =\gamma= \Theta(1/\epsilon)$  to get an approximation ratio  $3+\epsilon$ with $O\Big(\frac{z}{\epsilon}\Big)$ outliers and $n^{O(1/\epsilon)}$-time,  matching the best approximation ratio for $k$-median based on local search. To obtain any $(O(1), O(1))$-bicriteria approximation, it suffices and is convenient to set $\ell = \gamma = 1$.  This offline algorithm will serve as the baseline for our online algorithm for \kmo. 

The main idea behind the algorithm is that we convert the problem into the $k$-median with \emph{penalty} problem.  Compared to \kmo, in the problem we are not given the number $z$ of outliers, but instead we are given a penalty cost $p \geq 0$ for not connecting a point. Our goal is to choose $k$ medians and connect some points to the $k$ medians so as to minimize the sum of the connection cost and penalty cost.  So, we shall use the parameter $p$ to control the number of outliers in a soft way. 

Indeed, the $k$-median with penalty problem is equivalent to the original $k$-median problem up to the modification of the metric. For every two points $u, v \in F \cup C$, we define $d_p(u,v) := \min\{d(u, v), p\}$. 
%
Then it is easy to see that, the $k$-median with penalty problem becomes the $k$-median problem on the metric $d_p$.  For a set $S \subseteq F$ of $k$ medians, we define  $\cost_{p}(S) :=  \sum_{j \in C} d_p(j, S)$
to be the cost of the solution $S$ to the $k$-median instance with metric $d_p$, or equivalently, the $k$-median instance on metric $d$ with per-outlier penalty cost $p$. 

\textbf{Swap Operations for $k$-Median with Outliers}\ \  Given a set $S \subseteq F$ of $k$ medians, and an integer $\ell \geq 1$, an $\ell$-swap on $S$ is a pair $(A^*, A)$ of medians, such that $A \subseteq S, A^* \subseteq F \setminus S$ and $|A| = |A^*| \leq \ell$.  Applying the swap operation $(A^*, A)$ on $S$ will update $S$ to $S \cup A^* \setminus A$. Notice that after the operation $S$ still has size $k$.  We simply say $(A^*, A)$ is a swap on $S$ if it is an $\ell$-swap for some $\ell \geq 1$.

\begin{definition}[Efficient swaps]\label{def:efficient-op}
For any $\rho, p \geq 0$, a swap $(A^*, A)$ on a solution $S \subseteq F, |S| =k$ is said to be $\rho$-efficient w.r.t the penalty cost $p$, if we have $\cost_p(S \cup A^* \setminus A) <  \cost_{p}(S)  - |A|\rho$.
\end{definition}

In particular a $0$-efficient swap with respect to some penalty cost $p \geq 0$ is a swap whose application on $S$ will strictly decrease $\cost_p(S)$. The efficiency parameter $\rho$ will be used later in the online algorithm, in which we apply a swap only if it can decrease $\cost_p(S)$ significantly to guarantee that the recourse of our algorithm is small. 

The following theorem can be shown by modifying the analysis for the classic $(3+\frac2\ell)$-approximation local search algorithm for $k$-median \cite{Williamson}. We leave its proof to the supplementary material. 

\begin{restatable}{theorem}{adjuflapprox}\label{thm:local-search-apx-ratio}
	Let $S$ and $S^*$ be two sets of medians with $|S| = |S^*| = k$. Let $p, \rho \geq 0$, and $\ell \geq 1$ is an integer. If there are no $\rho$-efficient $\ell$-swaps on $S$ w.r.t the penalty cost $p$, then we have
		\begin{align*}
			\cost_p(S) \leq  \sum_{j \in C} \min\left\{\left(3 + \frac2\ell\right) d_p(j, S^*), \left(1+\frac1\ell\right)p\right\}  + k\rho.
		\end{align*}
\end{restatable}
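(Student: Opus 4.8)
The plan is to adapt the classical local-search analysis for $k$-median (as in Williamson's book) to the penalty metric $d_p$, taking care that the "no efficient swap" hypothesis is with respect to the $\rho$-slack. Throughout, work entirely in the metric $d_p$, so that "distance" means $d_p$ and "cost" means $\cost_p$. For each optimal median $i^* \in S^*$, let $N^*(i^*)$ denote the clients assigned to $i^*$ in the optimal solution $S^*$ (ties broken arbitrarily), and for each $j \in C$ let $\sigma(j)$ be its nearest median in $S$ and $\sigma^*(j)$ its nearest median in $S^*$. The key combinatorial device is to build a set of at most $k$ candidate $\ell$-swaps, one "test swap" associated with small groups of optimal medians, so that each median of $S$ is swapped out at most once across all test swaps and each median of $S^*$ is swapped in at most $\ell$ times — this is exactly where the parameter $\ell$ buys the improvement from $5$ to $3 + 2/\ell$.

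**The matching/partition step.** First I would set up the standard bipartite structure: consider the map from $S^*$ to $S$ sending $i^*$ to the median of $S$ closest (in $d_p$) to $i^*$. Call $s \in S$ "good" if exactly one $i^*$ maps to it, and partition the remaining medians into blocks. Following the $\ell$-swap refinement, group the optimal medians so that we can form $\ell$-swaps $(A^*, A)$ in which $|A| = |A^*| \le \ell$, each $s \in S$ appears in some $A$ at most once, and each $i^* \in S^*$ appears in some $A^*$ at most $\ell$ times; medians of $S$ that are "captured" (the closest $S$-median to two or more optimal medians) are never swapped out, so reassigning their clients is safe. Since every such swap is not $\rho$-efficient, Definition~\ref{def:efficient-op} gives $\cost_p(S \cup A^* \setminus A) \ge \cost_p(S) - |A|\rho$, i.e. the cost change is $\ge -|A|\rho$.

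**Bounding the cost change of each test swap.** For a test swap $(A^*, A)$, I would exhibit a specific feasible reassignment of clients witnessing an upper bound on $\cost_p(S \cup A^* \setminus A) - \cost_p(S)$: clients in $N^*(i^*)$ for $i^* \in A^*$ are rerouted to $i^*$ (paying $d_p(j, i^*) - d_p(j, S) = d_p(j, S^*) - d_p(j, \sigma(j))$), and clients currently served by some $s \in A$ but not reassigned above are rerouted via the standard "$j \to \sigma^*(j) \to$ (closest $S$-median to $\sigma^*(j)$, which is not in $A$)" detour, costing at most $d_p(j,\sigma^*(j)) + d_p(\sigma^*(j), \sigma(j)) - d_p(j,\sigma(j)) \le 2 d_p(j, S^*)$ extra by triangle inequality on $d_p$ (note $d_p$ is a metric, so the inequality is valid). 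Summing the "$\ge -|A|\rho$" inequalities over all test swaps and using that each $j$ is counted the right number of times yields
\begin{align*}
0 \le \sum_{j \in C}\left(d_p(j,S^*) - d_p(j,S)\right) + 2\sum_{j \in C} d_p(j,S^*) + \frac1\ell \cdot (\text{reassignment terms}) + k\rho,
\end{align*}
which after collecting gives a bound of the form $\cost_p(S) \le (3 + 2/\ell)\cost_p(S^*) + k\rho$. The refinement to the $\min\{(3+2/\ell)d_p(j,S^*), (1+1/\ell)p\}$ form comes from observing, per client, that rerouting $j$ to \emph{nobody} (leaving it an outlier, cost $p$ in $d_p$) is always an available option in the reassignment, so whenever the "detour" bound $(3+2/\ell)d_p(j,S^*)$ exceeds what paying the penalty would cost, we use the penalty bound instead; a careful accounting shows the per-client contribution is at most $\min\{(3+2/\ell)d_p(j,S^*),(1+1/\ell)p\}$.

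**Main obstacle.** The delicate part is the last refinement — getting the per-client $\min$ with the $(1+1/\ell)p$ branch rather than just a global $(3+2/\ell)\cost_p(S^*) + k\rho$ bound. One must argue that in the reassignment, clients whose optimal service cost $d_p(j,S^*)$ is already capped at $p$ (i.e. clients that are "morally outliers" in $S^*$) can be handled by the penalty option so that their contribution to the swap's cost increase is at most $(1+1/\ell)p$ rather than $3d_p(j,S^*)$; this requires splitting the client set according to whether $d_p(j, \sigma(j))$ equals $p$ and carefully choosing, for each test swap, whether to reroute or to "outlier-ize" each affected client, so that the $1/\ell$ overcounting factor lands on $p$ and not on $3 d_p(j,S^*)$. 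I would handle this by doing the per-client charging argument (as in the standard proof) but with the metric $d_p$ and with the explicit extra option of making a client an outlier, then verifying the bound $\min\{(3+2/\ell)d_p(j,S^*),(1+1/\ell)p\}$ case by case on whether $j$'s connection costs in $S$ and $S^*$ are truncated by $p$.
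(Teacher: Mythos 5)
Your high-level plan matches the paper's: build test $\ell$-swaps from the map $\phi$ sending each $i^* \in S^*$ to its nearest $S$-median, never swap out captured (heavily-hit) $S$-medians, bound each test swap's cost change via direct reconnection to $A^*$ plus the triangle-inequality detour through $\phi(\sigma^*(j))$, and inject an ``outlier option'' into the reconnection to obtain the per-client $\min$. However, there are two concrete gaps.

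First, the bipartite multiplicities are stated backward. You assert that each $s \in S$ is swapped out at most once and each $i^* \in S^*$ is swapped in at most $\ell$ times. What the argument actually needs (and what the paper arranges with a random assignment $\beta$) is the opposite: each $i^* \in S^*$ is swapped in \emph{exactly} once, and each $i \in S$ is swapped out at most $1 + 1/\ell$ times \emph{in expectation}. The ``exactly once'' on the $S^*$ side is what makes the first reconnection term sum to $\sum_j\big(d_p(j,S) - d_p(j,S^*)\big)$ and makes $\sum |A| = k$, hence the additive $k\rho$. Under your multiplicities, an $i^*$ appearing up to $\ell$ times would multiply the $d_p(j,S^*)$-terms by $\ell$ rather than $1+1/\ell$, would break the telescoping on the $d_p(j,S)$ side (clients with $d_p(j,S) < d_p(j,S^*)$ get double-charged the wrong way), and would give up to $\ell k\rho$ instead of $k\rho$.

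Second, you correctly flag the $\min$-refinement as the delicate step, but the way you propose to resolve it (``case by case on whether $j$'s connection costs in $S$ and $S^*$ are truncated by $p$'') is not the right split and is left unverified. The paper instead fixes upfront the set $O^* = \big\{j : d_p(j,S^*) \geq \tfrac{\ell+1}{3\ell+2}p\big\}$, exactly the clients for which $(1+\tfrac1\ell)p \leq (3+\tfrac2\ell)d_p(j,S^*)$. In each test swap, clients in $\sigma^{-1}(A)\cap O^*$ are reconnected arbitrarily, costing at most $p$ (the diameter of $d_p$), while clients outside $O^*$ take the usual detour. Because each $i\in S$ is swapped out at most $1+1/\ell$ times in expectation, the penalty charge on $O^*$-clients appears with coefficient $1+1/\ell$, which is precisely how the $(1+\tfrac1\ell)p$ branch of the $\min$ arises; the non-$O^*$ clients supply the $(3+\tfrac2\ell)d_p(j,S^*)$ branch. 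Without choosing this threshold set $O^*$ before constructing the test swaps, it is unclear the per-swap reconnection is even well defined, let alone that the charges combine into the stated $\min$.
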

To understand the theorem, we first assume $\rho = 0$; thus $S$ is a local optimum for the $k$-median instance defined by the metric $d_p$.  If we replace $\min\big\{\big(3+\frac2\ell\big) d_p(j, S^*), \big(1+\frac1\ell\big)p\big\}$ by $\big(3+\frac2\ell\big)d_p(j, S^*)$, then the theorem says that a local optimum solution for $k$-median is a $\big(3+\frac2\ell\big)$-approximation, which is exactly the locality gap theorem for $k$-median.  Using that $d_p$ has diameter $p$, we can obtain the improvement as stated in the theorem; this will be used to give a better trade-off between the two factors in the bicriteria approximation ratio. When $\rho \geq 0$, we lose an additive factor of $k\rho$ on the right side of the inequality. 

Theorem~\ref{thm:local-search-apx-ratio} immediately gives a $\left(\big(1+\frac1\ell\big)(1+\gamma), \big(3+\frac2\ell\big)\big(1+\frac{1}{\gamma}\big)\right)$-bicriteria approximation algorithm for the \kmo problem, for any $\gamma > 0$.  By binary search, we assume we know the optimum value $\opt$ for the \kmo instance. Let $p = \frac{(3\ell+2)\opt}{(\ell+1)\gamma z}$. Then we start from an arbitrary set $S$ of $k$ medians, and repeatedly apply $0$-efficient $\ell$-swaps w.r.t penalty cost $p$ on $S$, until no such swaps can be found. The running time of the algorithm is $n^{O(\ell)}$.\footnote{When the distances are not polynomially bounded, the running time of the algorithm may be large; but using an appropriate $\rho$ we can reduce the running time to polynomial by losing a factor of $(1+\epsilon)$ in the approximation ratio.}  Applying Theorem~\ref{thm:local-search-apx-ratio} with $S^*$ being the optimum  solution for the \kmo instance, we have that the final solution $S$ has $\cost_p(S) \leq \sum_{j \in C} \min\Big\{\big(3+\frac2\ell\big)d_p(j, S^*), \big(1+\frac1\ell\big)p\Big\} \leq \big(3+\frac2\ell\big)\opt + \big(1+\frac1\ell\big)zp$. 
The second inequality holds since for inliers $j$ in the solution $S^*$, we have $d_p(j, S^*) \leq d(j, S^*)$ and for outliers $j$ we have $d_p(j, S^*) \leq p$. 
We return $S$ as the set of medians, and let $j$ be an outlier if $d_p(j, S) = p$.  Then, the number of outliers our algorithm produces is at most $\big(1+\frac1\ell\big)z + \frac{\big(3+\frac2\ell\big)\opt}{p}= \big(1+\frac1\ell\big)z +  \big(1+\frac1\ell\big)\gamma z = \big(1+\frac1\ell\big)(1+\gamma)z$. The cost of the solution is at most $\big(3+\frac2\ell\big)\opt + \big(1+\frac1\ell\big)zp = \big(3+\frac2\ell\big)\opt + \left(3+\frac2\ell\right)\frac{\opt}{\gamma} = \big(3+\frac2\ell\big)\opt\left(1+\frac{1}{\gamma}\right)$.

	\section{Online Algorithm for $k$-Median with Outliers}
In this section, we give our online algorithm for \kmo that proves Theorem ~\ref{thm:mainoutlier} (and thus Theorem~\ref{thm:kmedian}). As mentioned earlier, indeed we give a more general result that allows trade-offs between the approximation ratio, the number of outliers and running time:
\begin{lemma}
	\label{lemma:okm-main}
	Let $\ell \geq 1$ be an integer, $\epsilon > 0$ be small enough and $\gamma > 0$ be a real number. There is a deterministic $n^{O(\ell)}$-time algorithm for online $k$-median with outliers with a total recourse of $O\big(\frac{k^2\log n \log (nD)}{\epsilon}\big)$.  The algorithm achieves a bicriteria approximation of  $\Big(\frac{1}{1-\epsilon}\big(1+\frac1\ell\big)(1+\gamma), \frac{1}{1-\epsilon}\big(3+\frac2\ell\big)(1+\frac{2}{\gamma}\big)\Big)$ in the static $F$ setting, and $\Big(\frac{1}{1-\epsilon}\big(1+\frac1\ell\big)(1+\gamma), \frac{1}{1-\epsilon}\big(3+\frac2\ell\big)(1+\frac{4}{\gamma}\big)\Big)$ in the $F = C$ setting.
\end{lemma}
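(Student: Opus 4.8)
The idea is to run, in an online fashion, the offline local-search procedure of Section~\ref{sec:okm} on top of a geometrically increasing guess of the optimum, using the efficiency parameter $\rho$ to throttle the recourse. We maintain an estimate $\LB$ that is always a valid lower bound on the current optimum $\opt$; note $\opt$ is non-decreasing over time and $1\le\opt\le nD$ once it is positive (the case $\opt=0$ is handled trivially by keeping a zero-cost solution). Time is partitioned into phases, within which $\LB$ is constant; in a phase we use the per-outlier penalty $p:=\Theta(\LB/(\gamma z))$ and the efficiency threshold $\rho:=\Theta(\epsilon\LB/k)$, with constants chosen to match the target guarantee. When a client arrives we add it to $C$ (and, in the $F=C$ setting, to $F$) and then repeatedly apply $\rho$-efficient $\ell$-swaps w.r.t.\ $p$ on the current solution $S$ until none remains. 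If at that point $\cost_p(S)>T$ for a threshold $T=\Theta(\LB)$ chosen via Theorem~\ref{thm:local-search-apx-ratio} (with $S^*$ the true optimum) so that $\cost_p(S)>T$ certifies $\opt>2\LB$, we open a new phase: double $\LB$ (which keeps $\LB\le\opt$), recompute $p,\rho,T$, and re-run local search; otherwise we proceed to the next client. Since each phase boundary at least doubles $\LB$, there are $O(\log(nD))$ phases in total.

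\textbf{Approximation.} Fix a time step and let $S$ be the solution after we finish processing the arriving client, so that no $\rho$-efficient $\ell$-swap exists and $\cost_p(S)\le T$. For the connection cost of the inliers of $S$, apply Theorem~\ref{thm:local-search-apx-ratio} with $S^*$ the optimum \kmo solution, bounding $d_p(j,S^*)\le d(j,S^*)$ on the inliers of $S^*$ and $d_p(j,S^*)\le p$ on its $z$ outliers; using $\LB\le\opt$ and the choice of $p$, this cost is at most $\frac{1}{1-\epsilon}(3+\frac2\ell)(1+\frac2\gamma)\opt$, where the additive $k\rho=\Theta(\epsilon\LB)$ term is what the $\frac1{1-\epsilon}$ factor absorbs, and the factor $2$ in $\frac2\gamma$ (versus the offline $\frac1\gamma$) is precisely the slack that permits $\LB$ to be as small as $\opt/2$. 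For the number of outliers of $S$ (the points $j$ with $d_p(j,S)=p$), it is at most $\cost_p(S)/p\le T/p$, and $T$ is tuned so that $T/p=(1+\frac1\ell)(1+\gamma)z\cdot(1+O(\epsilon))\le\frac1{1-\epsilon}(1+\frac1\ell)(1+\gamma)z$. In the $F=C$ setting the only change is that a newly arrived point also joins $F$, so the relevant reference solution must be taken over the facility set frozen at the start of the phase, costing one further constant factor that is absorbed into the $\frac4\gamma$ term.

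\textbf{Recourse.} Each $\rho$-efficient $\ell$-swap $(A^*,A)$ decreases $\cost_p(S)$ by at least $|A|\rho$ while incurring recourse $|A|$, so the recourse inside a phase is at most (the total $\cost_p$-decrease caused by swaps in that phase)$/\rho$. Writing this decrease as (cost on entering the phase) $+$ (total increase from client arrivals in the phase) $-$ (cost on leaving the phase), dropping the last (nonnegative) term, and observing that the cost on entering a phase is $O(\LB)$ (inductively: increasing $p$ at a phase boundary merely doubles $\cost_p(\cdot)$, and within a phase $\cost_p$ never exceeds $O(\LB)$ after a client is fully processed, by the previous paragraph), we get that the per-phase recourse is $O((\LB+\Delta^+)/\rho)$, where $\Delta^+$ denotes the total $\cost_p$-increase caused by client arrivals during the phase. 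Granting the crucial bound $\Delta^+=O(k\LB\log n)$ and using $\rho=\Theta(\epsilon\LB/k)$, the per-phase recourse is $O(k^2\log n/\epsilon)$; since there are $O(\log(nD))$ phases and the $O(\log(nD))$ phase-boundary re-runs each cost only $O(\LB/\rho)=O(k/\epsilon)$, the total recourse is $O(k^2\log n\log(nD)/\epsilon)$.

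\textbf{The crucial bound $\Delta^+=O(k\LB\log n)$, and the main obstacle.} To bound $\Delta^+$ we use the single-swap inequality underlying Theorem~\ref{thm:local-search-apx-ratio}: fix a reference solution $S^*$, an optimal $k$-median solution for the metric $d_p$ on the client set at the end of the phase, so that $\cost_p(S^*)\le\cost_p(S_{\mathrm{end}})\le T=O(\LB)$; then for each median $i^*\in S^*$ serving client set $C^*_{i^*}$, there is always some $i\in S$ with $\cost_p(S\setminus\{i\}\cup\{i^*\})\le\sum_{j'\notin C^*_{i^*}}d_p(j',S)+3\cost_p(S^*)$. Since the solution $S$ present at a client's arrival is $\rho$-locally optimal for the pre-arrival instance, this particular swap cannot be $\rho$-efficient, which forces the total $d_p$-cost (w.r.t.\ the current $S$) of the already-arrived clients of $C^*_{i^*}$ to stay $O(\LB)$ at every arrival. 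A charging/telescoping argument over the at most $n$ arrivals in each of the $k$ groups $C^*_{i^*}$ — exploiting that $\cost_p$ lives in a window of width $O(\LB)$ and drops by at least $\rho$ at each swap — converts these per-arrival snapshots into $\Delta^+=O(k\LB\log n)$, the $\log n$ coming from the number of times a single group's cost can be rebuilt from near zero. Making this charging argument precise — in particular, handling the fact that a swap that lowers the overall cost may temporarily raise the cost of an individual group — is the main obstacle; everything else is routine bookkeeping, and the $F=C$ setting requires only re-running the argument with $S^*$ over the frozen facility set, at the price of one more constant factor.
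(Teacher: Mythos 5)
Your scaffolding matches the paper's: partition time into $O(\log(nD))$ phases/stages in which the penalty and efficiency thresholds are fixed, run local search after each arrival, and show that within each phase the recourse is $O(k^2\log n/\epsilon)$. The two surface-level deviations are defensible. First, you maintain $\LB$ explicitly and derive $p,\rho,T$ from it, doubling $\LB$ on a cost-threshold trigger, whereas the paper maintains $p$ directly and doubles it when the outlier count exceeds $\frac{1}{1-\epsilon}(1+\frac1\ell)(1+\gamma)z$; both schemes give the same factor-of-two slack (the paper via Lemma~\ref{lemma:bound-p}, you via $\LB\ge\opt/2$), which is where the $\frac{2}{\gamma}$ comes from. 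Second, you fix $\rho=\Theta(\epsilon\LB/k)$ throughout the phase rather than recomputing $\rho=\epsilon\cost_p(S)/k$ at each swap, which lets you bound the per-phase recourse by the elementary (total decrease)$/\rho$ accounting instead of invoking the harmonic-sum Lemma~\ref{lemma:helper-sum-b/a}; this is a genuine simplification, and it works because you keep $\cost_p(S)=\Theta(\LB)$ throughout a phase so the two choices of $\rho$ are within a constant of each other.

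However, your proposal has a real gap, and it is exactly where you flag an ``obstacle.'' Everything hinges on $\Delta^+=O(k\LB\log n)$, which is the paper's Lemma~\ref{lemma:km-bound-delta}, and you do not prove it; worse, your sketch heads in an unproductive direction. You propose to track the $d_p$-cost of each optimal cluster $C^*_{i^*}$ as a quantity that evolves across swaps, and worry about the fact that a globally cost-decreasing swap can temporarily \emph{raise} the cost of one group, then invoke a ``rebuilding from near zero'' count to extract the $\log n$ factor. That obstacle is not one you need to confront, and the intuition about rebuilding is not where the logarithm comes from. The paper's argument is pointwise in time and never compares a group's cost at two different moments. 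Fix $S^*$ to be an optimal $k$-median solution for the metric $d_{\bfp}$ on the client set at the end of the stage, and for each $i^*\in S^*$ order the in-stage arrivals it serves as $j'_1,\dots,j'_s$. At the instant $j'_r$ arrives (for $r\ge2$), the solution $S$ is $\rho$-locally optimal for the pre-arrival instance, so $\cost_{\bfp}(S)=O(\opt'_T)=O(\LB)$; the $r-1$ earlier members of this star are already present, so $\sum_{u<r} d_{\bfp}(j'_u,S)\le\cost_{\bfp}(S)=O(\LB)$ and $\sum_{u<r} d_{\bfp}(j'_u,i^*)\le\cost_{\bfp}(S^*)=O(\LB)$. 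Averaging over these $r-1$ clients yields one $j'_u$ with $d_{\bfp}(j'_u,S)+d_{\bfp}(j'_u,i^*)=O(\LB/(r-1))$, hence $d_{\bfp}(i^*,S)=O(\LB/(r-1))$ by the triangle inequality, and therefore $\Delta'_r\le d_{\bfp}(j'_r,i^*)+d_{\bfp}(i^*,S)\le d_{\bfp}(j'_r,i^*)+O(\LB/(r-1))$. Summing over $r$ gives $\sum_r\Delta'_r\le \cost_{\bfp}(S^*\restriction C^*_{i^*})+O(\LB\log s)$, and summing over the $k$ medians gives $\Delta^+=O(k\LB\log n)$; the $\log$ is purely the harmonic sum $\sum_{r}1/(r-1)$, and there is no charging across swaps at all. (The $r=1$ term is handled separately by $\Delta'_1\le\bfp=O(\LB)$.) Without this argument your proof is not complete, and the charging scheme you gesture at does not obviously close.

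One further small point: in the $F=C$ setting the paper does not ``freeze the facility set at the start of the phase''; it simply observes that $\opt$ and $\opt'$ can shrink by at most a factor of two as new facilities become available (one-median-per-star argument), and replaces $\opt,\opt'$ by their running maxima $\overline\opt,\overline{\opt'}$, which are monotone. Your freezing idea also loses a constant factor and is plausible, but note that it changes which solution $S^*$ you may use in the local-search inequality — you must check that the swap $(A^*,A)$ you invoke only opens facilities that existed at the time the swap is ruled out, which your sketch leaves implicit.
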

By setting $\ell = \gamma = 1$ and $\epsilon$ to be  a small enough constant, Lemma~\ref{lemma:okm-main} implies Theorem~\ref{thm:mainoutlier}.  On the other hand, one can set $\ell = \gamma = \frac1\epsilon$ to achieve an approximation ratio of $3 + O(\epsilon)$ with $O(\frac{z}{\epsilon})$ outliers and running time $n^{O(1/\epsilon)}$.
The goal of this section is to prove Lemma~\ref{lemma:okm-main}.  To explain our main ideas more clearly,  we assume $F$ is static: the set $F$ of potential medians is fixed and given at the beginning of the online algorithm.  
In the supplementary material Section~\ref{subsec:F=C}, we show how the algorithm can be extended to the setting where $F = C$.  

To avoid the case where the optimum solution has cost $0$, we add an additive factor of $0.1$ in all definitions of costs: the cost of a solution to a \kmo instance, and $\cost_p(S)$. We can think of that in the instance we have one point and one median that are $0.1$ distance apart and have distance $\infty$ to all other points in the metric.   Since all distances are integers and the approximation ratio we are aiming at is less than 10, the additive factor of $0.1$ does not change our approximation ratio. Theorem~\ref{thm:local-search-apx-ratio} holds with an additive factor of $0.1$ added to the right side of the inequality. 


In essence, our algorithm repeatedly applies $\rho$-efficient swaps w.r.t the penalty cost $p$, for some carefully maintained parameters $\rho$ and $p$.  The main algorithm is described in Algorithm~\ref{alg:online-k-median}. In each time $t$, we add the arrival point $j_t$ to $C$ (Step \ref{step:include-j-t}).  Then we repeatedly perform $\big(\rho:=\frac{\epsilon\cdot\cost_{p}(S)}{k}\big)$-efficient swaps until no such operation exists (loop~\ref{step:okm-while}). If the solution $S$ obtained has more than $\frac1{1-\epsilon}\big(1+\frac1\ell\big)(1+\gamma)z$ outliers (defined as the points $j$ with $d_p(j, S) = p$, or equivalently $d(j, S) \geq p$), we then double $p$ (Step~\ref{step:increase-p}) and redo the while loop.  At the beginning of the algorithm, we set $p$ to be a small enough number (Step~\ref{step:init-z}).


\begin{algorithm}[H]
	\caption{Online algorithm for $k$-median} \label{alg:online-k-median}
	\begin{algorithmic}[1]
		\State $p \gets \min\big\{\frac1{10\gamma z}, 0.1\big\}$ \label{step:init-z}
		\For {$t \gets 1$ to $n$}
			\State $C \gets C \cup \{j_t\}$ \label{step:include-j-t}
				\While{there exists a $\left(\rho:= \frac{\epsilon\cdot \cost_{p}(S)}{k}\right)$-efficient $\ell$-swap on $S$ w.r.t the penaty cost $p$}  \label{step:okm-while}
					\State perform the swap operation
				\EndWhile
				\If {$d_p(j, S) = p$ for more than $\frac1{1-\epsilon}\big(1+\frac1\ell\big)(1+\gamma)z$ points $j \in C$} \label{step:check-outlier}
					\State $p \gets 2p$ \label{step:increase-p}
					\State \textbf{goto} \ref{step:okm-while} 
				\EndIf
		\EndFor
	\end{algorithmic}
\end{algorithm}


\subsection{Approximation Ratio of the Algorithm} \label{subsec:BoundingApprox}
We start from analyzing the approximation ratio of the algorithm.  At any moment of the algorithm, we use $\opt$ to denote the cost of the optimum solution for the \kmo problem defined by the current point set $C$. Theorem~\ref{thm:local-search-apx-ratio} gives the following.

\begin{restatable}{claim}{claimofkmboundratio}
	\label{claim:ofkm-bound-ratio} At any moment immediately after the while loop (Loop~\ref{step:okm-while}), 
	we have $(1-\epsilon)\cost_p(S) \leq \big(3+\frac2\ell)\opt + \big(1+\frac1\ell\big)zp$.
\end{restatable}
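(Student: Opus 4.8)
The plan is to derive the claim almost immediately from Theorem~\ref{thm:local-search-apx-ratio}, the only genuine observation being that the algorithm's choice $\rho = \frac{\epsilon\cdot\cost_p(S)}{k}$ is precisely what is needed to make the additive $k\rho$ term in that theorem equal to $\epsilon\cdot\cost_p(S)$, so that it can be absorbed into the left-hand side.

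First I would fix the moment in question: we are immediately after the while loop (Loop~\ref{step:okm-while}) has terminated, with current solution $S$ and current penalty value $p$, so by the loop's exit condition there is no $\rho$-efficient $\ell$-swap on $S$ w.r.t.\ the penalty cost $p$, where $\rho=\frac{\epsilon\cdot\cost_p(S)}{k}$. Let $S^*$ be an optimum solution to the \kmo instance on the \emph{current} client set $C$, and let $O^*\subseteq C$ with $|O^*|=z$ be its outlier set, so that $\opt = \sum_{j\in C\setminus O^*} d(j,S^*)$ up to the $0.1$ additive convention. Applying Theorem~\ref{thm:local-search-apx-ratio} to $S$ and this $S^*$, and using $k\rho = \epsilon\cdot\cost_p(S)$, I would move the additive term to the left to get
\[
(1-\epsilon)\cost_p(S) \;\le\; \sum_{j\in C}\min\left\{\left(3+\tfrac2\ell\right)d_p(j,S^*),\ \left(1+\tfrac1\ell\right)p\right\}.
\]
Then I would bound the right-hand side by splitting $C$ according to $O^*$: for $j\notin O^*$ use $\min\{\cdots\}\le \left(3+\tfrac2\ell\right)d_p(j,S^*)\le\left(3+\tfrac2\ell\right)d(j,S^*)$ (since $d_p\le d$), and for $j\in O^*$ use $\min\{\cdots\}\le\left(1+\tfrac1\ell\right)p$. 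Summing gives $\sum_{j\in C}\min\{\cdots\}\le \left(3+\tfrac2\ell\right)\opt + \left(1+\tfrac1\ell\right)|O^*|p = \left(3+\tfrac2\ell\right)\opt + \left(1+\tfrac1\ell\right)zp$, which is the claimed bound.

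The only remaining bookkeeping is the $0.1$ additive terms: Theorem~\ref{thm:local-search-apx-ratio} carries an extra $+0.1$ on its right side, while $\opt$ under the additive convention exceeds the raw optimum $\sum_{j\in C\setminus O^*}d(j,S^*)$ by exactly $0.1$; since $3+\tfrac2\ell\ge 1$ we have $\left(3+\tfrac2\ell\right)(\opt-0.1)+0.1\le\left(3+\tfrac2\ell\right)\opt$, so both additive corrections are absorbed and the inequality holds verbatim. There is essentially no hard step here — the claim is a one-line consequence of Theorem~\ref{thm:local-search-apx-ratio} once $\rho$ is substituted; the points to be careful about are merely (i) choosing $S^*$ to be optimal for the current $C$ rather than for a fixed instance, (ii) invoking $d_p(j,S^*)\le p$ (not $d_p(j,S^*)\le d(j,S^*)$) for the outliers of $S^*$, and (iii) the $0.1$ accounting above.
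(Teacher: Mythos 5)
Your proposal is correct and follows essentially the same route as the paper: apply Theorem~\ref{thm:local-search-apx-ratio} at the loop's exit with $S^*$ the current \kmo optimum, substitute $k\rho=\epsilon\cost_p(S)$, split the $\min$-sum over the optimum's $z$ outliers versus inliers, and absorb the $\epsilon\cost_p(S)$ term on the left. Your bookkeeping of the $0.1$ additive convention is a bit more explicit than the paper's (which simply writes $0.1+\cdots\le(3+\tfrac2\ell)\opt+\cdots$), but it is the same argument.
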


\begin{restatable}{lemma}{lemmaboundp}
	\label{lemma:bound-p}
	At any moment, we have $p \leq \frac{2(3\ell+2)\opt}{\gamma(\ell+1)z}$. 
\end{restatable}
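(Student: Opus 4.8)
Proof plan for Lemma~\ref{lemma:bound-p} ($p \leq \frac{2(3\ell+2)\opt}{\gamma(\ell+1)z}$ at all times).

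The plan is to argue by contradiction via the doubling mechanism: $p$ only ever increases by factors of $2$ (Step~\ref{step:increase-p}), and it does so only when the $\rho$-efficient-swap while loop has terminated with more than $\tfrac{1}{1-\epsilon}(1+\tfrac1\ell)(1+\gamma)z$ outliers. So it suffices to show that whenever $p$ reaches a value $p_0 := \tfrac{(3\ell+2)\opt}{\gamma(\ell+1)z}$ (note: this is exactly the offline threshold, half of the claimed bound), the outlier test in Step~\ref{step:check-outlier} must \emph{fail}, so $p$ is never doubled past $2 p_0$. Since $p$ starts below $0.1 \le$ anything relevant and only doubles, the first time it would exceed $\tfrac{2(3\ell+2)\opt}{\gamma(\ell+1)z}$ it would have to jump from some value in $(p_0, 2p_0]$, which we will have ruled out. (One subtlety: $\opt$ only grows over time as points arrive, so a value of $p$ that was ``safe'' stays safe; I would note this monotonicity explicitly.)

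The key computation is: suppose the while loop has just terminated with the current penalty value $p \ge p_0$. By Claim~\ref{claim:ofkm-bound-ratio}, $(1-\epsilon)\cost_p(S) \le (3+\tfrac2\ell)\opt + (1+\tfrac1\ell)zp$. Every outlier $j$ (i.e.\ with $d_p(j,S) = p$) contributes exactly $p$ to $\cost_p(S)$, so if $N$ is the number of outliers then $Np \le \cost_p(S) \le \tfrac{1}{1-\epsilon}\big((3+\tfrac2\ell)\opt + (1+\tfrac1\ell)zp\big)$, giving
\begin{align*}
N \le \frac{1}{1-\epsilon}\left(\frac{(3+\tfrac2\ell)\opt}{p} + \Big(1+\tfrac1\ell\Big)z\right).
\end{align*}
Using $p \ge p_0 = \tfrac{(3\ell+2)\opt}{\gamma(\ell+1)z} = \tfrac{(3+2/\ell)\opt}{\gamma(1+1/\ell)z}$, the first term is at most $\tfrac{1}{1-\epsilon}(1+\tfrac1\ell)\gamma z$, so $N \le \tfrac{1}{1-\epsilon}(1+\tfrac1\ell)(1+\gamma)z$. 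Hence the strict inequality in Step~\ref{step:check-outlier} fails, and $p$ is not doubled. Therefore $p$ can only be doubled while $p < p_0$, so it never exceeds $2p_0 = \tfrac{2(3\ell+2)\opt}{\gamma(\ell+1)z}$.

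The main thing to be careful about — rather than a genuine obstacle — is the ordering of events inside one time step: within the \textbf{goto} loop, $p$ may double several times before the test passes, so I must check that the bound $p \le 2p_0$ is maintained throughout, not just at the end of a time step. This follows from the same argument applied to the last doubling: right before the doubling that would push $p$ above $2p_0$, we had $p > p_0$, and the analysis above shows the outlier test fails at that $p$, contradiction. I would also double-check the base case (initialization in Step~\ref{step:init-z} sets $p = \min\{\tfrac1{10\gamma z}, 0.1\}$, which is at most $\tfrac1{10\gamma z} \le p_0$ as long as $\opt \ge$ the $0.1$ additive term — and indeed $\opt \ge 0.1$ by the additive-term convention), so the invariant holds initially and is preserved by every doubling.
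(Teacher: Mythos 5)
Your proof is correct and takes essentially the same route as the paper's: you identify the same threshold $p^* = p_0 = \frac{(3\ell+2)\opt}{\gamma(\ell+1)z}$, show via Claim~\ref{claim:ofkm-bound-ratio} (and the fact that each outlier contributes exactly $p$ to $\cost_p(S)$) that the test in Step~\ref{step:check-outlier} must fail once $p \geq p_0$, and conclude that doubling can only occur from values $\leq p_0$ so $p \leq 2p_0$ throughout. Your handling of the base case and the monotonicity of $\opt$ is if anything slightly more careful than the paper's terse remark.
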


Combining Claim~\ref{claim:ofkm-bound-ratio} and Lemma~\ref{lemma:bound-p}, at the end of each time $t$, we have $(1-\epsilon)\cost_p(S) \leq \big(3+\frac2\ell\big)\opt + \big(1+\frac1\ell\big)zp \leq \big(3+\frac2\ell\big)\opt + \big(1+\frac1\ell\big)z \frac{2(3\ell+2)\opt}{\gamma(\ell+1)z} = \big(3+\frac2\ell\big)\opt + \big(3+\frac2\ell\big) \frac{2\opt}{\gamma} = \big(3+\frac2\ell\big)\big(1+\frac{2}{\gamma}\big)\opt$.  (This assumes $z \geq 1$, but the resulting inequality holds trivially when $z = 0$.) Defining the outliers to be the points $j$ with $d_p(j, S) = p$, our online algorithm achieves a bi-criteria approximation ratio of $\Big(\frac1{1-\epsilon}\big(1+\frac1\ell\big)(1+\gamma), \frac1{1-\epsilon}\big(3+\frac2\ell\big)\big(1+\frac{2}{\gamma}\big)\Big)$ since Step~\ref{step:check-outlier} guarantees that the solution $S$ has at most $\frac1{1-\epsilon}\big(1+\frac1\ell\big)(1+\gamma)z$ outliers.

\subsection{Analysis of Recourse} \label{subsec:Boundingrecourse}

We now proceed to the analysis of the total recourse of the online algorithm.   For simplicity, we define $\opt':=\min_{S' \subseteq F: |S'| = k } \cost_p(S')$ to be the cost of the optimum for the current $k$-median instance with metric $d_p$. Notice the difference between $\opt$ and $\opt'$: $\opt$ is for the original \kmo problem and  $\opt'$ is for the $k$-median with penalty problem (or $k$-median with metric $d_p$). So, $\opt'$ depends on both the current point set $C$ and the current $p$.  Like $\opt$, $\opt'$ can only increase during the course of the algorithm as $C$ only enlarges and $p$ only increases.

\begin{restatable}{claim}{claimboundpusingoptprime}
	\label{claim:bound-p-using-opt'}
	At any moment, we have $p \leq O(1) \cdot \opt'$.
\end{restatable}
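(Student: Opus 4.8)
The plan is to treat $p \le O(1)\cdot\opt'$ as an invariant and verify that it holds initially and is re-established whenever it could fail. Recall (as already observed in the text) that $\opt'$ is non-decreasing throughout the algorithm, since $C$ only grows and $p$ only increases, and that the only place $p$ changes is Step~\ref{step:increase-p}, where it is doubled. Hence it suffices to (i) check the invariant at the start, and (ii) show that immediately after every execution of Step~\ref{step:increase-p} we still have $p \le O(1)\cdot\opt'$ for the current $p$ and $C$; between two such events $p$ stays fixed while $\opt'$ can only grow, so the invariant is preserved in between. For (i): initially $p = \min\{\tfrac1{10\gamma z},0.1\}\le 0.1$, while $\opt'\ge 0.1$ because of the additive $0.1$ in $\cost_p(\cdot)$, so $p\le\opt'$.

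For (ii) I would fix a moment just before an execution of Step~\ref{step:increase-p} and let $S$ be the current solution. Two facts hold there simultaneously. First, we have just exited the while loop, so there is no $\rho$-efficient $\ell$-swap on $S$ w.r.t.\ $p$ for $\rho=\tfrac{\epsilon\cost_p(S)}{k}$; applying Theorem~\ref{thm:local-search-apx-ratio} with $S^*$ an optimal $k$-median solution on the metric $d_p$ (so $\cost_p(S^*)=\opt'$), using $\min\{(3+\tfrac2\ell)d_p(j,S^*),(1+\tfrac1\ell)p\}\le(3+\tfrac2\ell)d_p(j,S^*)$, and using $k\rho=\epsilon\cost_p(S)$ gives $(1-\epsilon)\cost_p(S)\le 0.1+(3+\tfrac2\ell)\opt'\le(4+\tfrac2\ell)\opt'$, where the last step uses $\opt'\ge 0.1$. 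Second, the condition of Step~\ref{step:check-outlier} holds, so the number of points $j$ with $d_p(j,S)=p$ is a positive integer; at least one such point contributes $p$ to $\cost_p(S)$, hence $\cost_p(S)\ge p$. Combining the two facts, $p\le\cost_p(S)\le\tfrac{4+2/\ell}{1-\epsilon}\opt'\le\tfrac{6}{1-\epsilon}\opt'$ just before the doubling (using $\ell\ge1$). After Step~\ref{step:increase-p} the value of $p$ at most doubles while $\opt'$ can only increase (it is monotone in $p$), so $p\le\tfrac{12}{1-\epsilon}\opt'$ with respect to the updated values, which re-establishes the invariant. Since $\epsilon$ is small enough, $\tfrac{12}{1-\epsilon}=O(1)$, and this constant does not even depend on $\ell$ or $\gamma$.

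I do not expect a real obstacle here — the argument is essentially bookkeeping. The only points that need a little care are: observing that the outlier count in Step~\ref{step:check-outlier}, being a positive integer whenever that step fires, forces $\cost_p(S)\ge p$ (this makes the argument go through uniformly, including the degenerate case $z=0$); remembering that $\opt'$ itself jumps up when $p$ is doubled, so the invariant must be reasserted for the new $\opt'$; and carrying the additive $0.1$ through the application of Theorem~\ref{thm:local-search-apx-ratio}. As an alternative one could instead combine the sharper Claim~\ref{claim:ofkm-bound-ratio} with the inequality $\opt\le\opt'$ (which holds once $zp\ge\opt'$, since then the optimal $d_p$-solution has at most $z$ penalized points and, taking those as outliers, is a feasible \kmo solution of cost $\le\opt'$), but the crude argument above already yields the claimed $O(1)$ bound.
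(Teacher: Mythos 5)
Your argument is correct and follows the paper's own proof essentially verbatim: apply Theorem~\ref{thm:local-search-apx-ratio} with $S^*$ a $d_p$-optimal solution to get $\cost_p(S) \leq O(1)\cdot\opt'$ after the while loop, observe that when Step~\ref{step:check-outlier} fires there is at least one point at distance $p$ so $p\le\cost_p(S)$, and then double. The only (immaterial) differences are that you carry the additive $0.1$ explicitly and you derive $p\le\cost_p(S)$ directly rather than by the paper's contrapositive phrasing.
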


We define a \emph{stage} of the online algorithm to be a period of the algorithm between two adjacent moments when we increase $p$ in Step~\ref{step:increase-p}. That is, a stage is an inclusion-wise maximal period of the algorithm in which the value of $p$ does not change.  From now on, we fix a stage and let $\bfp$ be the value of $p$ in the stage. So, $\bfp$ is fixed during the stage. Assume the stage starts at time $\tau$ and ends at time $\tau'$.  Notice that the stage contains the tail of time $\tau$, the head of time $\tau'$, and the entire time $\tau''$ for any $\tau'' \in [\tau + 1, \tau'-1]$.  An exceptional case is that $\tau = \tau'$, in which case the stage contains some period within time $\tau$. 


For every $t \in [\tau,\tau']$, let $\opt'_t$ be the optimum value for the $k$-median instance with $C = \{j_1, j_2, \cdots, j_t\}$ and metric $d_\bfp$.  So, for any $t \in [\tau, \tau']$, $\opt'_t$ is the value of $\opt'$ at any moment that is in the stage and after Step~\ref{step:include-j-t} at time $t$. For $t \in [\tau+1, \tau']$,  we define $\Delta_t$ to be the value of $\cost_{\bfp}(S)$ after Step~\ref{step:include-j-t} at time $t$, minus that before Step~\ref{step:include-j-t}. We view this as the increase of $\cost_\bfp(S)$ due to the arrival of $j_t$.   Let $\Delta_\tau$ be the value of $\cost_\bfp(S)$ at the beginning of the stage; that is, the moment immediately after $p$ is increased to $\bfp$.   

\begin{restatable}{lemma}{lemmakmbounddelta}
	\label{lemma:km-bound-delta}
	For every $T \in [\tau, \tau']$, we have 
		$\sum_{t = \tau}^{T}\Delta_t \leq O\left(k\log n\right) \opt'_{T}$.
\end{restatable}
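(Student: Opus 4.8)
The plan is to write $\sum_{t=\tau}^{T}\Delta_t = \Delta_\tau + \sum_{t=\tau+1}^{T}\Delta_t$, to observe that for $t\in[\tau+1,T]$ we have $\Delta_t = d_\bfp(j_t,S_t)$ where $S_t$ is the solution held just before $j_t$ arrives, and to charge the arrival costs against the clusters of a single optimal solution for the whole window. Two preliminary facts are used repeatedly. (i) After the while loop of Step~\ref{step:okm-while} at time $t$, the solution $S_t$ has no $\rho_t$-efficient $\ell$-swap w.r.t.\ $\bfp$ over $C_{t-1}:=\{j_1,\dots,j_{t-1}\}$, with $\rho_t=\epsilon\cost_\bfp(S_t)/k$; applying Theorem~\ref{thm:local-search-apx-ratio} with $S^*$ the optimum of the $d_\bfp$-instance on $C_{t-1}$ gives $\cost_\bfp(S_t)\le O(1)\,\opt'_{t-1}\le O(1)\,\opt'_T$, hence $\rho_t\le O(\opt'_T/k)$. (ii) By Claim~\ref{claim:bound-p-using-opt'} applied at the start of the stage, $\bfp\le O(1)\,\opt'_\tau\le O(1)\,\opt'_T$; and since the solution $S$ at the start of the stage is the solution at the end of the previous stage (which has no $\rho$-efficient swap w.r.t.\ $\bfp/2$), Theorem~\ref{thm:local-search-apx-ratio} together with $d_\bfp\le 2d_{\bfp/2}$ gives $\Delta_\tau=\cost_\bfp(S)\le 2\cost_{\bfp/2}(S)\le O(1)\,\opt'_T$ (for the first stage $\Delta_\tau=O(1)=O(\opt'_T)$ since all costs are at least $0.1$ and $\bfp\le 0.1$ there).

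For the arrival costs, fix an optimal solution $S^*$ of size $k$ for the $d_\bfp$-instance on $C_T$, so $\cost_\bfp(S^*)=\opt'_T$, and for $i^*\in S^*$ let $\sigma^{*-1}(i^*)$ be its cluster. Partition the times $t\in[\tau+1,T]$ according to the center $i^*=\sigma^*(j_t)$ serving $j_t$ in $S^*$; let $t_1<t_2<\dots<t_m$ be the times mapped to a fixed $i^*$. The key step is to bound $d_\bfp(i^*,S_{t_r})$. If $i^*\in S_{t_r}$ this is $0$. Otherwise, since $j_{t_1},\dots,j_{t_{r-1}}$ all lie in $C_{t_r-1}$ and in $\sigma^{*-1}(i^*)$, the single-swap form of the argument behind Theorem~\ref{thm:local-search-apx-ratio} (apply the swap that opens $i^*$ and closes the best median of $S_{t_r}$, which is not $\rho_{t_r}$-efficient) gives $\sum_{q<r}d_\bfp(j_{t_q},S_{t_r})\le \rho_{t_r}+3\cost_\bfp(S^*)=O(\opt'_T)$. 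Combining this with the triangle inequality $d_\bfp(j_{t_q},S_{t_r})\ge d_\bfp(i^*,S_{t_r})-d_\bfp(i^*,j_{t_q})$, summed over $q<r$, yields
\begin{align*}
(r-1)\,d_\bfp(i^*,S_{t_r}) \ \le\ O(\opt'_T) + \mathrm{OPT}_{i^*}, \qquad \mathrm{OPT}_{i^*}:=\sum_{q=1}^{m} d_\bfp(i^*,j_{t_q}).
\end{align*}
Thus $d_\bfp(i^*,S_{t_r})\le \tfrac{1}{r-1}\big(O(\opt'_T)+\mathrm{OPT}_{i^*}\big)$ for $r\ge 2$, while $d_\bfp(i^*,S_{t_1})\le \bfp=O(\opt'_T)$. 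Summing $d_\bfp(j_{t_r},S_{t_r})\le d_\bfp(j_{t_r},i^*)+d_\bfp(i^*,S_{t_r})$ over $r$ and using the harmonic bound $\sum_{r=2}^{m}\tfrac1{r-1}=O(\log n)$ gives $\sum_{r}d_\bfp(j_{t_r},S_{t_r})\le \mathrm{OPT}_{i^*}+O(\log n)\big(\opt'_T+\mathrm{OPT}_{i^*}\big)$. Finally, summing over the $k$ centers $i^*\in S^*$ and using $\sum_{i^*}\mathrm{OPT}_{i^*}\le\cost_\bfp(S^*)=\opt'_T$ gives $\sum_{t=\tau+1}^{T}\Delta_t\le O(k\log n)\,\opt'_T$; adding $\Delta_\tau=O(\opt'_T)$ finishes the proof.

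The main obstacle is the key step — controlling $d_\bfp(i^*,S_{t_r})$. The difficulty is that individual client connection costs in $S_t$ are not monotone in $t$ (the while loop only drives the \emph{total} cost down), so one cannot simply charge $\Delta_{t_r}=d_\bfp(j_{t_r},S_{t_r})$ to the cost $j_{t_r}$ has at a later time, nor telescope it against $\opt'_T-\opt'_\tau$. The resolution is to avoid tracking any client over time: at the single time $t_r$, near-local-optimality of $S_{t_r}$ already forces it to serve the $r-1$ earlier members of $i^*$'s optimal cluster at total cost $O(\opt'_T)$, which (via the triangle inequality) forces $S_{t_r}$ to contain a median within distance $O(\opt'_T/(r-1))$ of $i^*$. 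The $1/(r-1)$ decay is exactly what converts up to $\Omega(n)$ expensive arrivals into an $O(k\log n)$ overhead. A secondary technical point is that one should use the \emph{single} fixed optimum $S^*$ for $C_T$ (so that the cluster structure $\sigma^*$ is stable across the whole window), at the mild price of only the bound $\cost_\bfp(S^*|_{C_{t_r-1}})\le\opt'_T$ in place of $\opt'_{t_r-1}$; this is harmless because the geometric/harmonic sums are dominated by the last term anyway.
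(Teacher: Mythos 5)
Your proof is correct and follows essentially the same strategy as the paper's: bound $\Delta_\tau$ separately via Theorem~\ref{thm:local-search-apx-ratio}, fix a single optimum $S^*$ for $C_T$ under $d_\bfp$, partition arrivals by their cluster in $S^*$, bound the first arrival of each cluster by $\bfp = O(\opt'_T)$, show $d_\bfp(i^*, S_{t_r})$ decays like $1/(r-1)$, and sum the harmonic series over $r$ and then over the $k$ centers. The one place you diverge is the mechanism for the $1/(r-1)$ decay: the paper simply notes $\sum_{q<r}\bigl(d_\bfp(j_{t_q},S_{t_r})+d_\bfp(j_{t_q},i^*)\bigr)\le \cost_\bfp(S_{t_r})+\cost_\bfp(S^*)=O(\opt'_T)$ (using only the global cost bound $\cost_\bfp(S_{t_r})\le O(1)\opt'_T$ from Theorem~\ref{thm:local-search-apx-ratio}), and then averages to find one good witness $j_{t_q}$; you instead invoke a separate single-swap non-efficiency bound to control $\sum_{q<r}d_\bfp(j_{t_q},S_{t_r})$ and then sum the reverse triangle inequality. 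Both are valid and give the same asymptotics — the paper's version is marginally leaner since the global cost bound already dominates the cluster-restricted sum, so the single-swap lemma is not needed. A minor plus on your side: you explicitly handle the transition $\bfp/2\to\bfp$ when bounding $\Delta_\tau$ (via $d_\bfp\le 2d_{\bfp/2}$ and the fact that the inherited $S$ is locally near-optimal w.r.t.\ $\bfp/2$, not $\bfp$), a point the paper's phrasing leaves implicit.
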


We need the following technical lemma from \cite{GKLX20}.
\begin{restatable}{lemma}{helpersumba}
	\label{lemma:helper-sum-b/a}
	Let $b \in \R_{\geq 0}^H$ for some integer $H \geq 1$. Let $B_{H'} = \sum_{h=1}^{H'} b_h$ for every $H' = 0, 1, \cdots, H$. Let $0 < a_1 \leq a_2 \leq \cdots \leq a_H$ be a sequence of real numbers and $\alpha > 0$ such that $B_{H'} \leq \alpha \cdot a_{H'}$ for every $H' \in [H]$. Then we have 
		$\displaystyle \sum_{h = 1}^H \frac{b_h}{a_h} \leq \alpha \left(\ln \frac{a_H}{a_1} + 1\right)$.
\end{restatable}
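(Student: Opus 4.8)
The plan is to prove this purely analytically by comparing each term $b_h/a_h$ against an integral, using the constraint $B_{H'} \le \alpha a_{H'}$ to control things. First I would handle the degenerate terms: if some $b_h = 0$ it contributes nothing, so I may assume all relevant terms are positive; and note $a_1 > 0$ is given so all ratios are well-defined. The key idea is a discrete integration-by-parts (Abel summation) on $\sum_h b_h / a_h$. Writing $b_h = B_h - B_{h-1}$, we get
\begin{align*}
\sum_{h=1}^H \frac{b_h}{a_h} = \sum_{h=1}^H \frac{B_h - B_{h-1}}{a_h} = \frac{B_H}{a_H} + \sum_{h=1}^{H-1} B_h\left(\frac{1}{a_h} - \frac{1}{a_{h+1}}\right),
\end{align*}
where I used $B_0 = 0$ and regrouped. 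Since $a_h \le a_{h+1}$, each coefficient $\frac{1}{a_h} - \frac{1}{a_{h+1}}$ is nonnegative, so I can substitute the bound $B_h \le \alpha a_h$ into every term (and $B_H \le \alpha a_H$ into the first):
\begin{align*}
\sum_{h=1}^H \frac{b_h}{a_h} \le \alpha + \alpha\sum_{h=1}^{H-1} a_h\left(\frac{1}{a_h} - \frac{1}{a_{h+1}}\right) = \alpha + \alpha\sum_{h=1}^{H-1}\left(1 - \frac{a_h}{a_{h+1}}\right).
\end{align*}

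It remains to bound $\sum_{h=1}^{H-1}(1 - a_h/a_{h+1})$ by $\ln(a_H/a_1)$. This is a standard estimate: since $1 - x \le -\ln x = \ln(1/x)$ for all $x > 0$, taking $x = a_h/a_{h+1}$ gives $1 - a_h/a_{h+1} \le \ln(a_{h+1}/a_h)$, and the right-hand side telescopes to $\ln(a_H/a_1)$. Combining, $\sum_h b_h/a_h \le \alpha + \alpha\ln(a_H/a_1) = \alpha(\ln(a_H/a_1) + 1)$, as desired.

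I don't anticipate a genuine obstacle here — the lemma is a clean analytic fact. The only points requiring a little care are the bookkeeping in the Abel summation (getting the boundary term $B_H/a_H$ and the $B_0 = 0$ convention right), ensuring the rearranged coefficients are truly nonnegative so that substituting the upper bound $B_h \le \alpha a_h$ is valid, and invoking the elementary inequality $1 - x \le \ln(1/x)$ correctly for the telescoping step. An alternative, if one prefers to avoid Abel summation, is a direct term-by-term argument: bound $b_h/a_h = (B_h - B_{h-1})/a_h$ and compare to $\int_{a_{h-1}}^{a_h} \alpha\,\frac{\mathrm{d}t}{t}$ after checking $B_{h-1} \le \alpha a_{h-1}$ implies $b_h \le \alpha a_h - B_{h-1} \le \alpha a_h - \cdots$; but the Abel-summation route above is cleaner and I would go with it.
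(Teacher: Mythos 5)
Your proof is correct and starts identically to the paper's: both use Abel summation (the paper writes it as $\sum_{h=1}^H B_h\left(\frac{1}{a_h}-\frac{1}{a_{h+1}}\right)$ with the convention $a_{H+1}=+\infty$, which is exactly your ``boundary term plus sum over $h<H$'' form), and both then substitute $B_h\le \alpha a_h$ to reach $\alpha\sum_{h=1}^{H-1}\left(1-\frac{a_h}{a_{h+1}}\right)+\alpha$. The finishes differ. You apply $1-x\le -\ln x$ to each term $1-\frac{a_h}{a_{h+1}}$, getting $\ln\frac{a_{h+1}}{a_h}$, and let the sum telescope directly to $\ln\frac{a_H}{a_1}$. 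The paper instead aggregates first: it rewrites the sum as $\alpha H-\alpha\sum_{h=1}^{H-1}\frac{a_h}{a_{h+1}}$, lower-bounds $\sum\frac{a_h}{a_{h+1}}$ by AM--GM as $(H-1)\left(\frac{a_1}{a_H}\right)^{1/(H-1)}$ (the product telescopes), and only then applies $1-e^{-x}\le x$ once to the aggregate. Your per-term telescoping route is shorter and avoids both the AM--GM step and the detour through $(H-1)$th roots; the paper's route is a bit more indirect but buys nothing extra here. Either way the inequality is tight in the same regime and the constants match.
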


We define $H = \tau' - \tau + 1$. For every $t \in [\tau', \tau]$, we define $b_{t-\tau'+1} = \Delta_t$ and $a_{t-\tau'+1} = \opt'_t$.  We define $B_{T-\tau'+1}$ for every $T = \tau-1, \tau, \cdots, \tau'$ to be $\sum_{t = \tau}^{T}b_{t-\tau'+1} = \sum_{t = \tau}^{T}\Delta_t$. By Lemma~\ref{lemma:km-bound-delta} we have $B_{H'} \leq \alpha\opt'_{H' + \tau-1} = \alpha \cdot a_{H'}$ for some $\alpha = O(k \log n)$ and every $H' \in [H]$.   In time $t$ within the stage, $\cost_{\bfp}(S)$ first increases by $\Delta_t$ in Step~\ref{step:include-j-t} (or becomes $\Delta_\tau$ at the beginning of the stage if $t = \tau$). Then for every median we swap inside the while loop~\ref{step:okm-while}, we decrease $\cost_{\bfp}(S)$ by at least $\frac{\epsilon\cost_{\bfp}(S)}{k} \geq \frac{\epsilon\cdot \opt'_{t}}{k}$, due to the use of the efficient swaps. Noticing that $\opt'_t$ is non-decreasing in $t$, using Lemma~\ref{lemma:helper-sum-b/a} we can bound the total recourse in the stage by
\begin{align*} \textstyle
	\sum_{t=\tau}^{\tau'} \frac{\Delta_t}{\epsilon\opt'_t/k}  = \frac k\epsilon\sum_{h=1}^{H} \frac{b_h}{a_h}  \leq
 \frac{k}{\epsilon}\alpha\left(\ln\frac{a_H}{a_1} + 1\right) = \frac{\alpha k}{\epsilon}\left(\ln\frac{\opt'_{\tau'}}{\opt'_{\tau}} + 1\right).
\end{align*}

Now it is time to consider all stages $[\tau, \tau']$ together.  The summation of $\ln \frac{\opt'_{\tau'}}{\opt_\tau}$ over all stages is the $\ln$ of the product of $\frac{\opt'_{\tau'}}{\opt'_\tau}$ over all stages.  For some time $t$ that crosses many different stages, $\opt'_t$ values depend on the $\bfp$ value of a stage. However as $\bfp$ increases, $\opt'_t$ can only increase. Therefore, the summation is at most $\ln$ of the ratio between the maximum possible $\opt'$ value and the minimum possible $\opt'$ value. So, this is at most $O(\log (n D))$.  There are most $\log_2 O(n D) = O(\log (nD))$ stages. Thus, the total recourse over the whole algorithm is at most $\frac{\alpha k}{\epsilon}\cdot O(\log (nD)) = O\left(\frac{k^2\log n \log (nD)}{\epsilon}\right)$. This finishes the proof of Lemma~\ref{lemma:okm-main}.

	\section{Experiments}
\label{sec:exp} 

In this section, we corroborate our theoretical findings by performing experiments on real world datasets.
Our goal is to empirically show that the local search algorithm is stable and does few reclusterings, while maintaining a good approximation factor.

\noindent{\bfseries Algorithm implementation:} We modified our algorithm slightly to make it faster: when a new data point comes, instead of conducting local search directly, we assign the point to its nearest center; then we check whether the current cost is at least $(1+\alpha)$ times the cost \emph{resulting from the last application of local search}, and if not we continue to the next data point without doing any local operations. It is easy to see that this will increase our approximation ratio by a $(1+\alpha)$ factor. Though this modification doesn't improve our worst-case recourse bound, it reduces the number of local operations needed when the incoming data are non-adversarial, which is often the case in practice. Throughout the experiment we set $\alpha=0.2$.

\noindent{\bfseries Data set and parameter setting:} Similar to \cite{LattanziV17}, we test the algorithm on three UCI data sets \cite{UCI-dataset}: (i) \textsc{Skin} with $245,057$ data points of dimension 4; (ii) \textsc{Covertype} with $581,012$ data points of dimension 54; In the experiment we'll only use the first $10$ features of \textsc{Covertype} because other features are categorical. (iii) \textsc{Letter} with $20,000$ data points of dimension 16. To keep the duration of experiments short, we restrict the experiments to the first 10K data points in each data set. We set the algorithm parameters $\epsilon=0.05$ and $\gamma=1$; these were chosen to minimize the number of discarded outliers. We set the available center locations $F=C$, so when a new data point comes, it will be added to both $F$ and $C$. Throughout the experiment, we set the number of outliers to be $z=200$, and tried three different values of $k\in\{10, 50, 100\}$. We observe that in all the runs, our algorithm removes at most $840$ outliers, hence achieving an approximation factor of $4.2$ on the number of discarded outliers.


\noindent{\bfseries Results:} We first show the how the recourse grows overtime in Figure~\ref{fig:recourse}. One can observe that the recourse dependence on $k$ is roughly $O(k \log n)$ instead of the $O(k^2 \log n \log (nD))$ worst-case bound predicted by our theoretical result. 
We also observe that the growth rate of recourse is lower for \textsc{Covertype} and \textsc{Letter} data sets compared to \textsc{Skin}. 
This is because of the data ordering in \textsc{Skin};
if we randomly shuffle the \textsc{Skin} data set and re-run the algorithm then we get a graph similar to the other two data sets.


\begin{figure}[!htb]
  \centering
  \subfloat[\textsc{Skin}]{
    \includegraphics[width=0.32\textwidth]{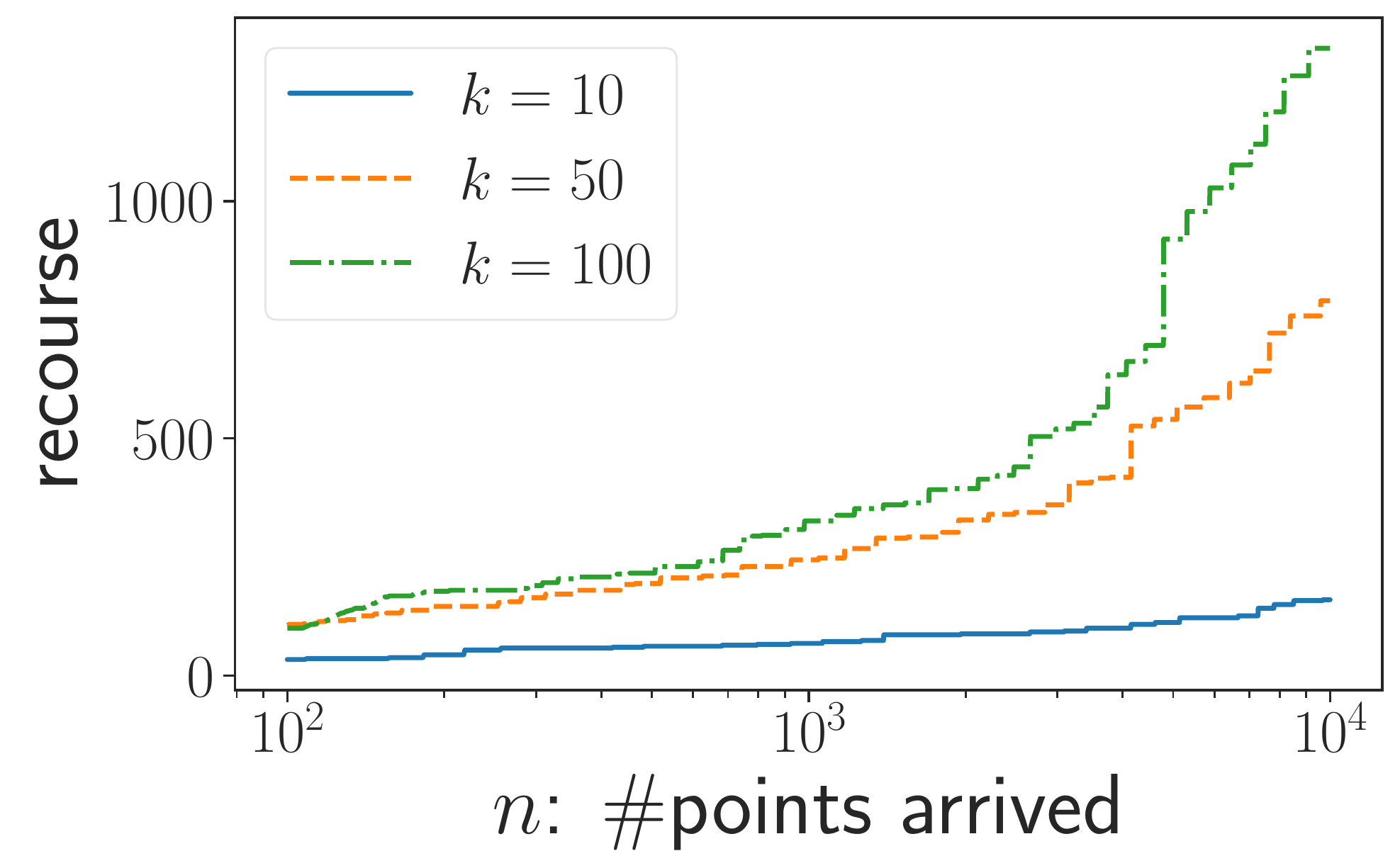}
  }
  ~
  \subfloat[\textsc{Covertype}]{
    \includegraphics[width=0.32\textwidth]{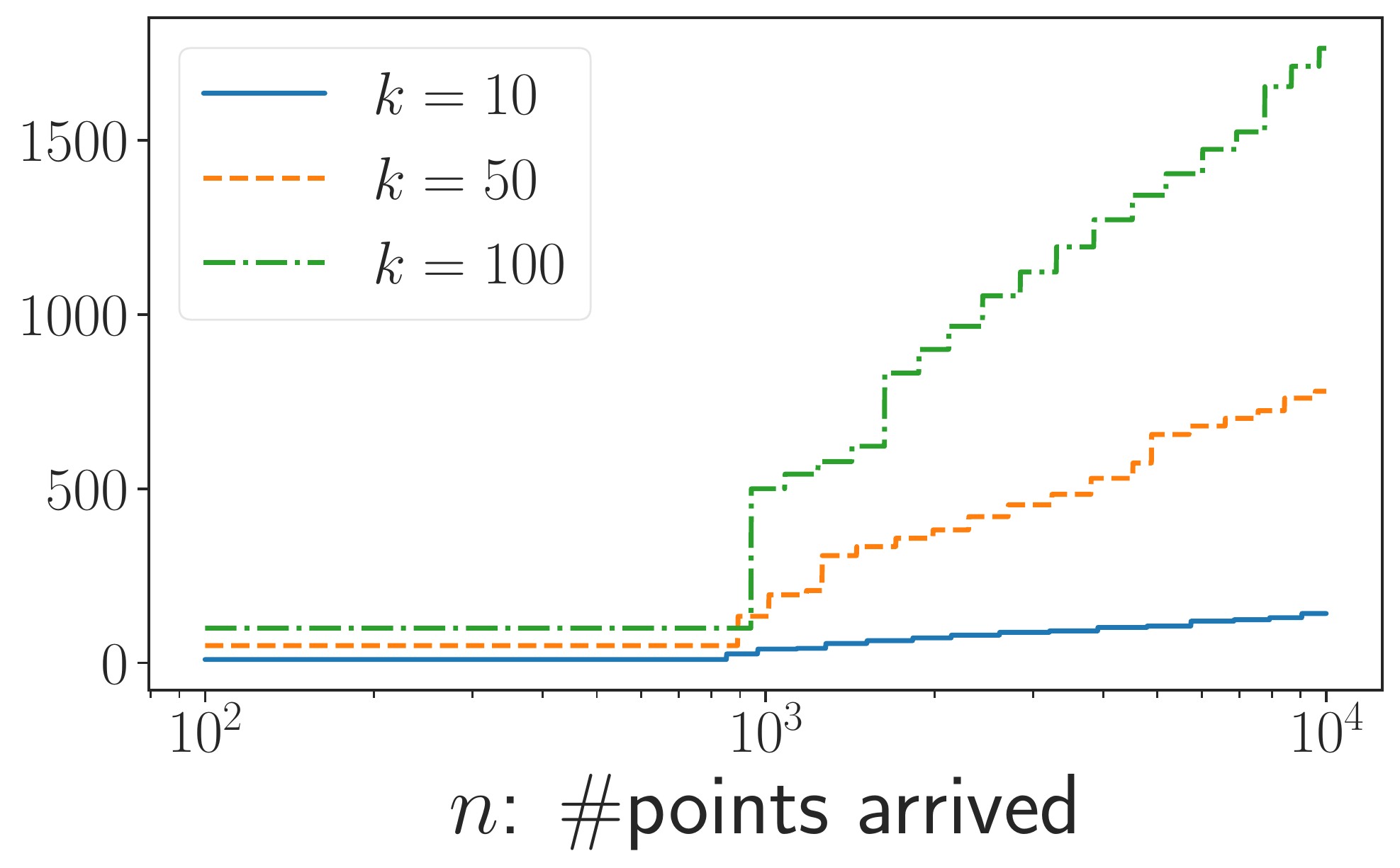}
  }
  ~
  \subfloat[\textsc{Letter}]{
    \includegraphics[width=0.32\textwidth]{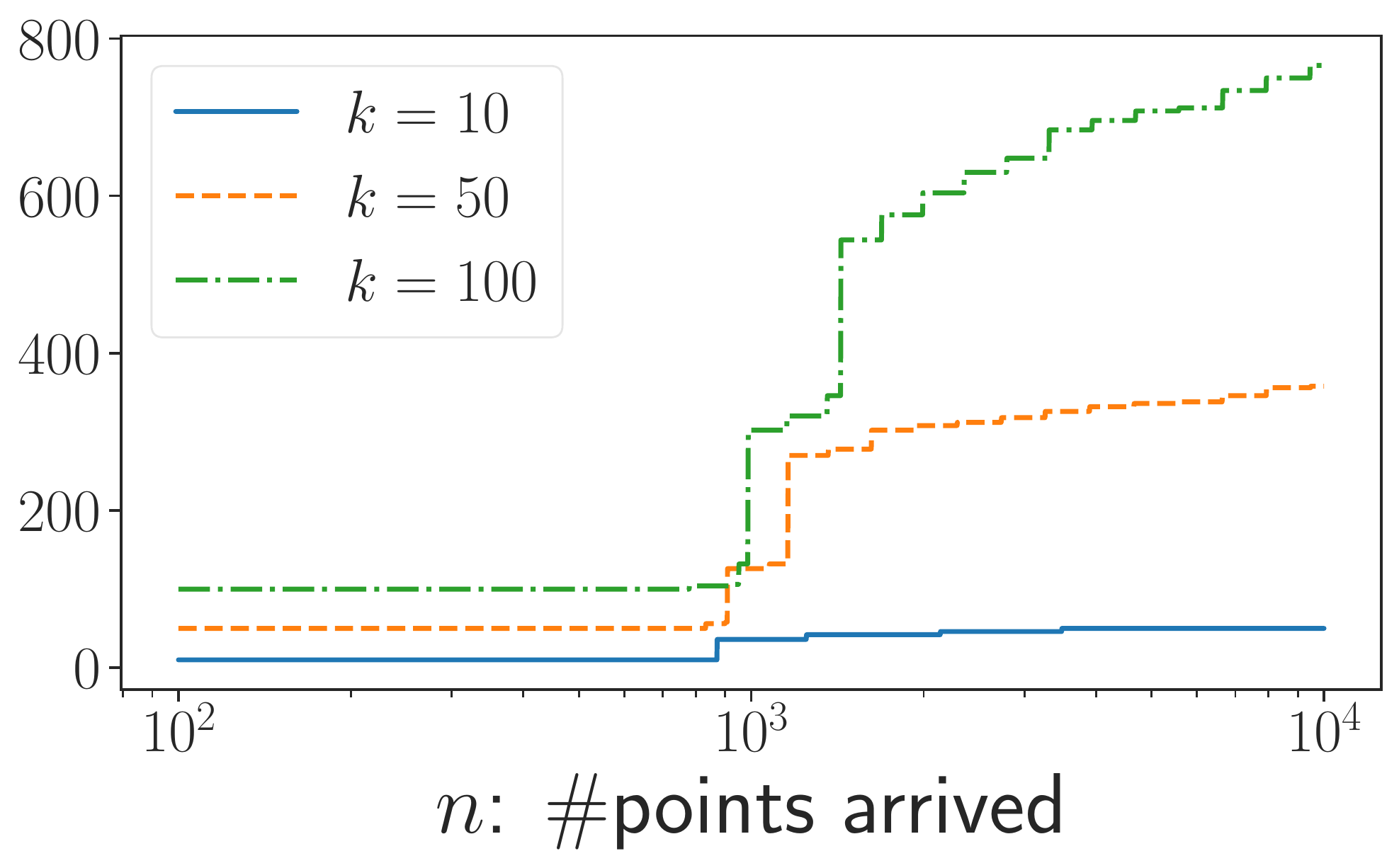}
  }
  \caption{Recourse over time. The $x$-axis is plotted in the log-scale}
  \label{fig:recourse}
\end{figure}
\begin{figure}[!htb]
  \centering
  \subfloat[\textsc{Skin}]{
    \includegraphics[width=0.32\textwidth]{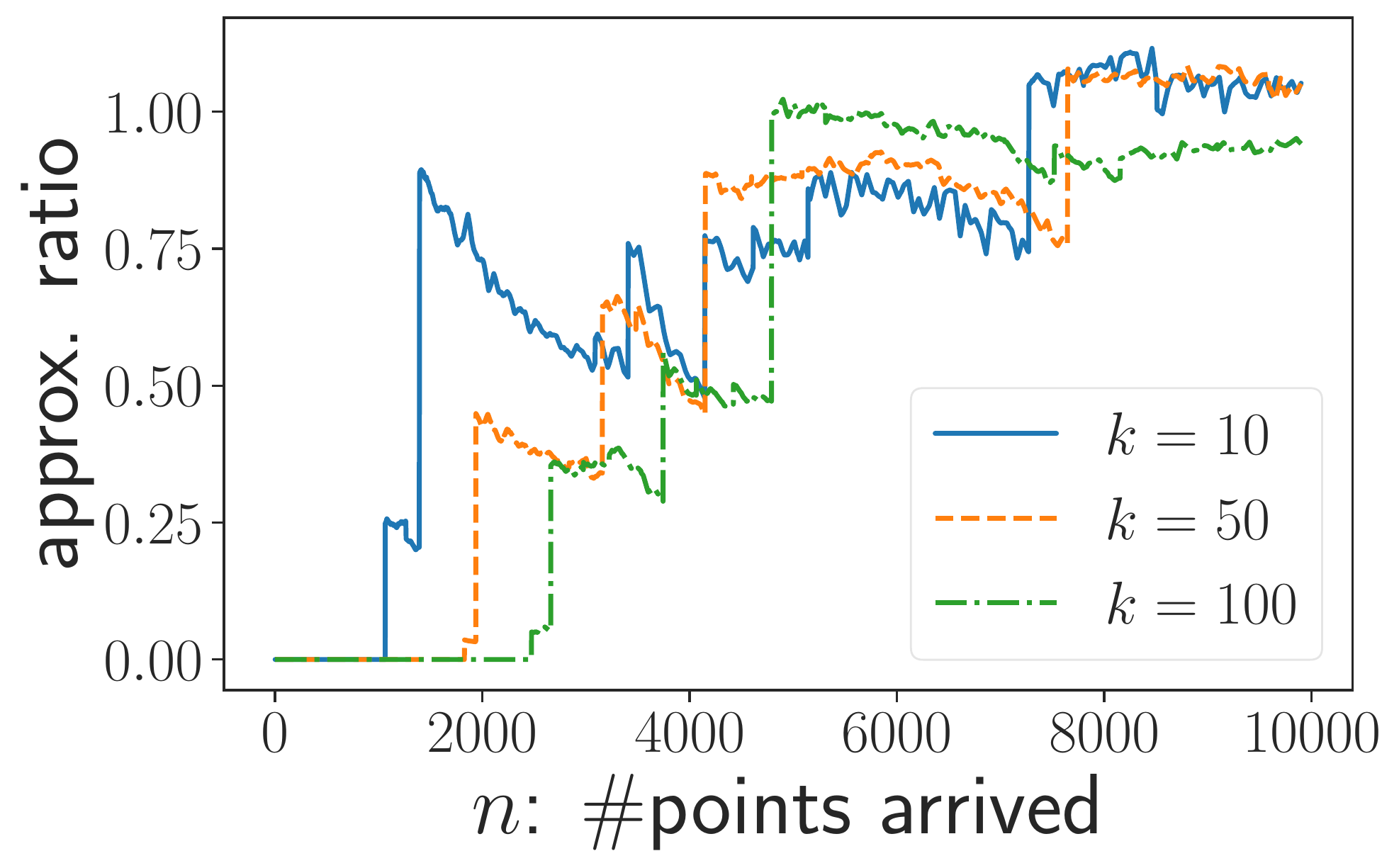}
  }
  \subfloat[\textsc{Covertype}]{
    \includegraphics[width=0.32\textwidth]{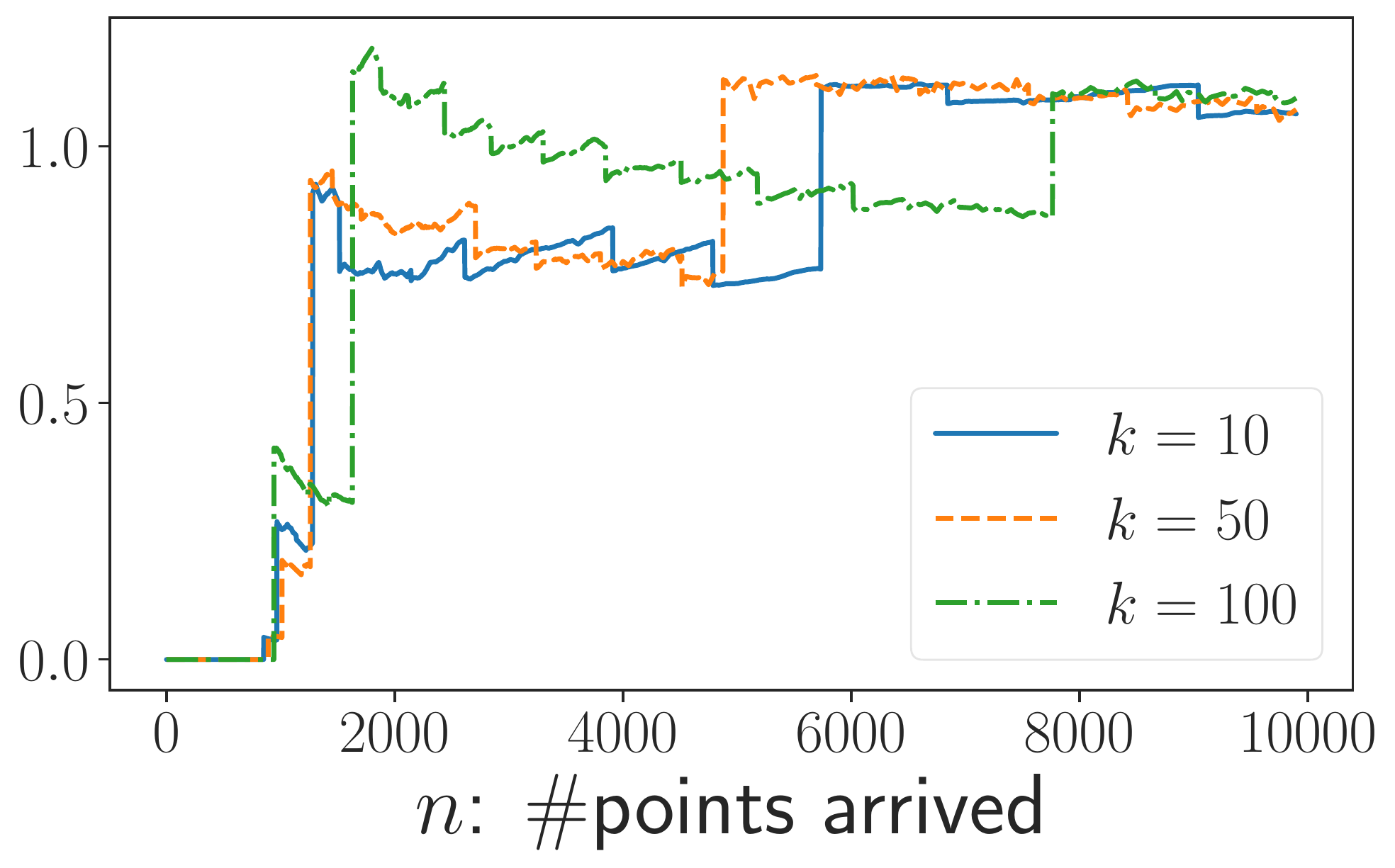}
  }
  \subfloat[\textsc{Letter}]{
    \includegraphics[width=0.32\textwidth]{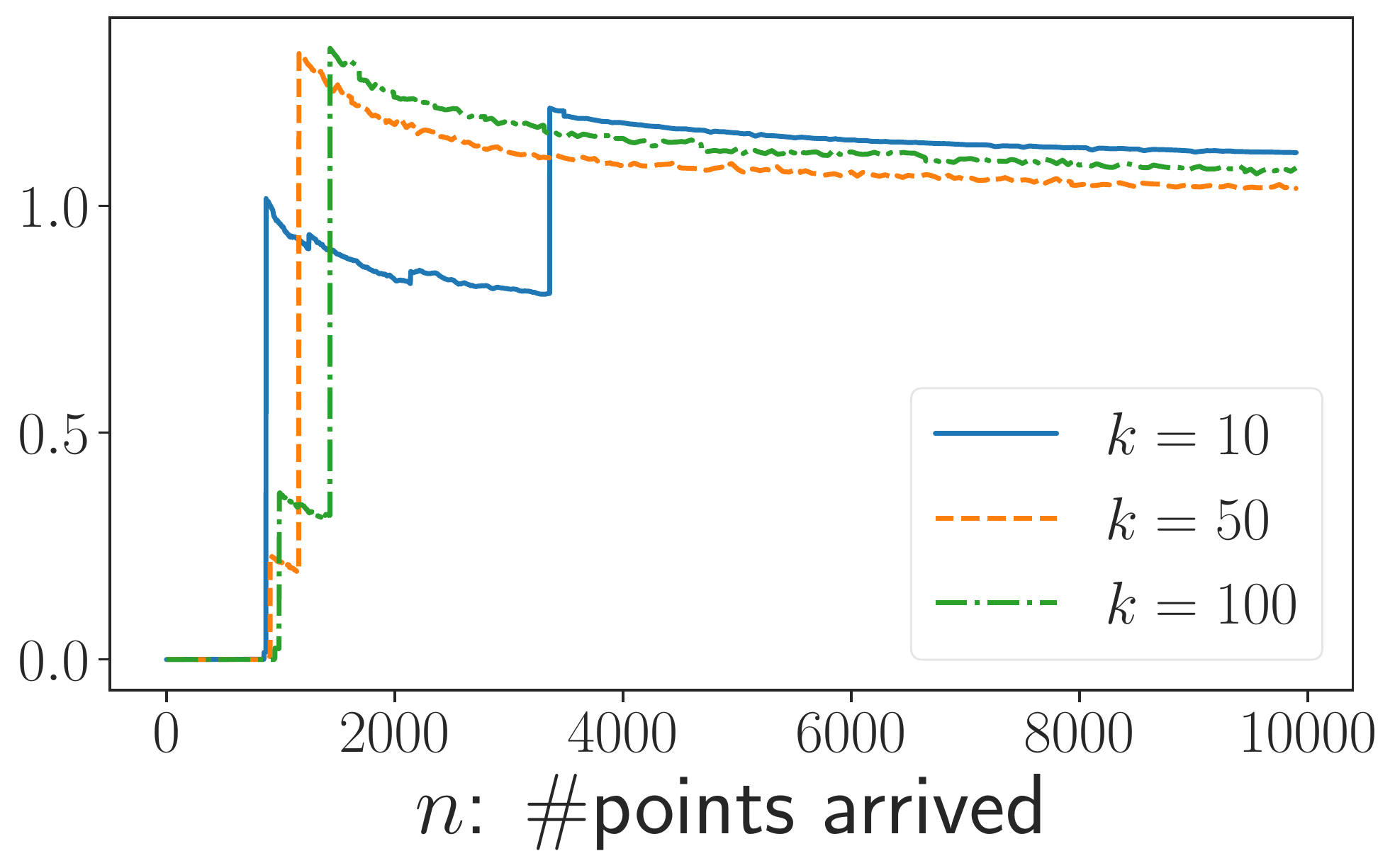}
  }
  \caption{Estimated approximation ratio over time.}
  \label{fig:cost}
\end{figure}
Now we turn to the quality of clustering maintained by our algorithm. Since the optimal solution is hard to compute, we use the clustering produced by offline $k$-means$--$ algorithm of \cite{ChawlaGionis13} as an coarse estimation of OPT. Specifically, for every 50 newly-arrived data points, we compute 5 offline $k$-means$--$ solutions (with different initializations) for all already arrived data points, and choose the best one as the estimation for OPT at this time point. Then we linearly interpolate between these estimations to get an OPT curve for every time point.
Figure~\ref{fig:cost} shows the estimated approximation ratio over time. {\em We see that the ratio is bounded by $1.5$ most of the times.} One might notice that the ratio sometimes even falls below 1. This is because of two reasons: 1) we only have an estimate of the real OPT; 2) the bi-criteria approximation means our algorithm might remove more than $z=200$ outliers, while the OPT is calculated by removing at most $z$ outliers.

Lastly, we also ran our experiments by allowing $z$ to increase over time and noticed similar behavior. Due to space constraints, we give those results in the supplementary material.

	\bibliographystyle{plain}
	\bibliography{main}

	\appendix
	\newpage
	\setcounter{page}{1}
	\pagenumbering{roman}
	\setcounter{figure}{0}
	\renewcommand{\thefigure}{\roman{figure}}
	\setcounter{theorem}{0}
	\renewcommand{\thelemma}{\Roman{theorem}}
	
	
\section{Missing Proofs from Section \ref{sec:okm}}
In this section we prove Theorem~\ref{thm:local-search-apx-ratio}.
\adjuflapprox*
\begin{proof}
 By making copies of medians, we assume $S$ and $S^*$ are disjoint. 
 For every $j \in C$, define $\sigma(j)$ and $\sigma^*(j)$ to be the closest median of $j$ in $S$ and $S^*$ respectively.  Let $O^* = \big\{j: d_p(j, S^*) \geq \frac{\ell+1}{3\ell+2}p\big\}$; these are the points $j$ with $\min\Big\{\big(3+\frac2\ell\big)d_p(j, S^*), \big(1+\frac1\ell\big)p\Big\} = \big(1+\frac1\ell\big)p$. For every $i^* \in S^*$, define $\phi(i^*)$ to be the nearest median of $i^*$ in $S$, according to the metric $d_p$, breaking ties arbitrarily. 
  We partition $S$ into three parts as follows:
\begin{itemize}[leftmargin=*,topsep=0pt,itemsep=0pt]
	\item $S_0 := \{i \in S: \phi^{-1}(i) = \emptyset\}$.
    \item $S_{1}:=\{i\in S: 1 \leq |\phi^{-1}(i)|\leq \ell\}$. 
  \item $S_{+}:=\{i\in S: |\phi^{-1}(i)| > \ell\}$. 
\end{itemize}

Let $S_{1}^*:=\phi^{-1}(S_{1})$ (which is defined as $\union_{i \in S_1}\phi^{-1}(i)$) and $S_{+}^*:=\phi^{-1}(S_{+})$; thus $(S_{1}^*, S_{+}^*)$ is a partition of $S^*$. Moreover,$|S_{1}| \leq |S^*_{1}|$ and $|S_{+}|\leq |S_{+}^*|/(\ell+1)$. This implies 
\begin{align}
	|S_0| &= k-|S_{1}| - |S_{+}|  \geq (|S^*_1| - |S_1|) +  (k - |S^*_{1}|)  - |S_{+}^*|/(\ell+1) \nonumber\\
	&= (|S^*_1| - |S_1|) + |S^*_+| - |S_{+}^*|/(\ell+1) = |S^*_1| - |S_1| + \frac{\ell}{\ell+1}|S^*_+|. \label{inequ:S0-big}
\end{align} 

We define a random mapping $\beta: S^* \to S_0 \cup S_1$ in the following way. See Figure~\ref{fig:swap} for the illustration of the procedure. We first define $\beta$ over $S^*_1$.  For every $i \in S_1$, we take an arbitrary $i^* \in \phi^{-1}(i)$ and define $\beta(i^*) = i$; for all other facilities ${i^*}'$ in $\phi^{-1}(i)$, we define $\beta({i^*}')$ to be an arbitrary median in $S_0$.   So, $|S_1|$ medians in $S^*_1$ are mapped to $S_1$ by $\beta$ and the remaining $|S^*_1| - |S_1|$ facilities in $S^*_1$ are mapped to $S_0$.  By \eqref{inequ:S0-big}, we can make $\beta$ restricted to $S^*_1$ an injective function. Moreover, at least $\frac{\ell}{\ell+1}|S^*_+|$ facilities in $S_0$ do not have preimages so far; call the facilities free facilities. Then, we map $S^*_+$ to these free facilities in a random way so that each free facility is mapped to at most twice and in expectation, each free facility in expectation has at most $\big(1+\frac1\ell\big)$ pre-images in the function $\beta$.

\begin{figure}[h]
	\centering
	\includegraphics[width=0.5\textwidth]{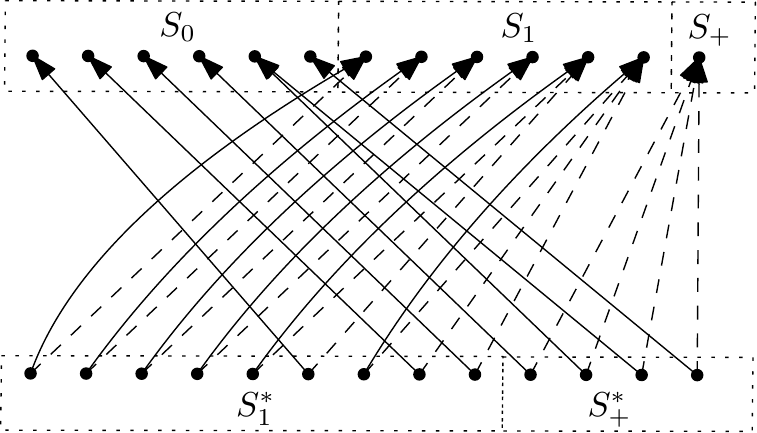} 
	\caption{The definition of the function $\beta$. The vertices at the top are $S$, the vertices at the bottom $S^*$, $\ell = 3$, and the dashed lines give the definition of $\phi$.   Then $S_0, S_1, S_+, S^*_1, S^*_+$ are depicted in the figure, and a possible function $\beta$ is given by the solid lines and curves.} \label{fig:swap}
\end{figure}

With the random $\beta$ defined,  we describe a set of \emph{test swaps} that will be used in our analysis. For every $i \in S_1$, we have a test swap $(\phi^{-1}(i), \beta(\phi^{-1}(i)))$.  For every $i^* \in S^*_{+}$, we have a test swap $(\{i^*\}, \{\beta(i^*)\})$. It is easy to see that each test swap $(A^*, A)$ has $A^* \subseteq F^*, A \subseteq F$ and $|A^*| = |A| \leq \ell$. Moreover, we have the following properties:
\begin{enumerate}[label=(P\arabic*)]
	\item Every median in $i^* \in S^*$ is swapped in exactly once in all test swaps.
	\item In expectation over all possible $\beta$'s, every median in $i \in S$ is swapped out at most $1 + \frac{1}{\ell}$ times in the test swaps. 
	\item For any test swap $(A^*, A)$, we have $\phi^{-1}(A) \subseteq A^*$.
\end{enumerate}
(P1) and (P2) follow from the construction of $\beta$. To see (P3), consider the two types of test swaps. If the test swap is $(\{i^*\}, \{\beta(i^*)\})$ for some $i^* \in S^*_+$, then $\beta(i^*) \in S_0$ and thus $\phi^{-1}(\beta(i^*)) = \emptyset$. If the test swap is $(\phi^{-1}(i), \beta(\phi^{-1}(i)))$ for some $i \in S_1$, then $\beta(\phi^{-1}(i))$ contains $i$ and all the other elements in the set are in $S_0$. Thus $\phi^{-1}(\beta(\phi^{-1}(i))) = \phi^{-1}(i)$.


 Focus on a fixed test swap $(A^*, A)$.  After opening $A^*$ and closing $A$, we can reconnect a subset of points in $\sigma^{-1}(A) \cup \sigma^{*-1}(A^*)$. We guarantee that all points in $\sigma^{-1}(j)$ will be reconnected. See Figure~\ref{fig:reconnection} for how we reconnect the points. 
 \begin{figure}
 	\centering
 	\includegraphics[width=0.8\textwidth]{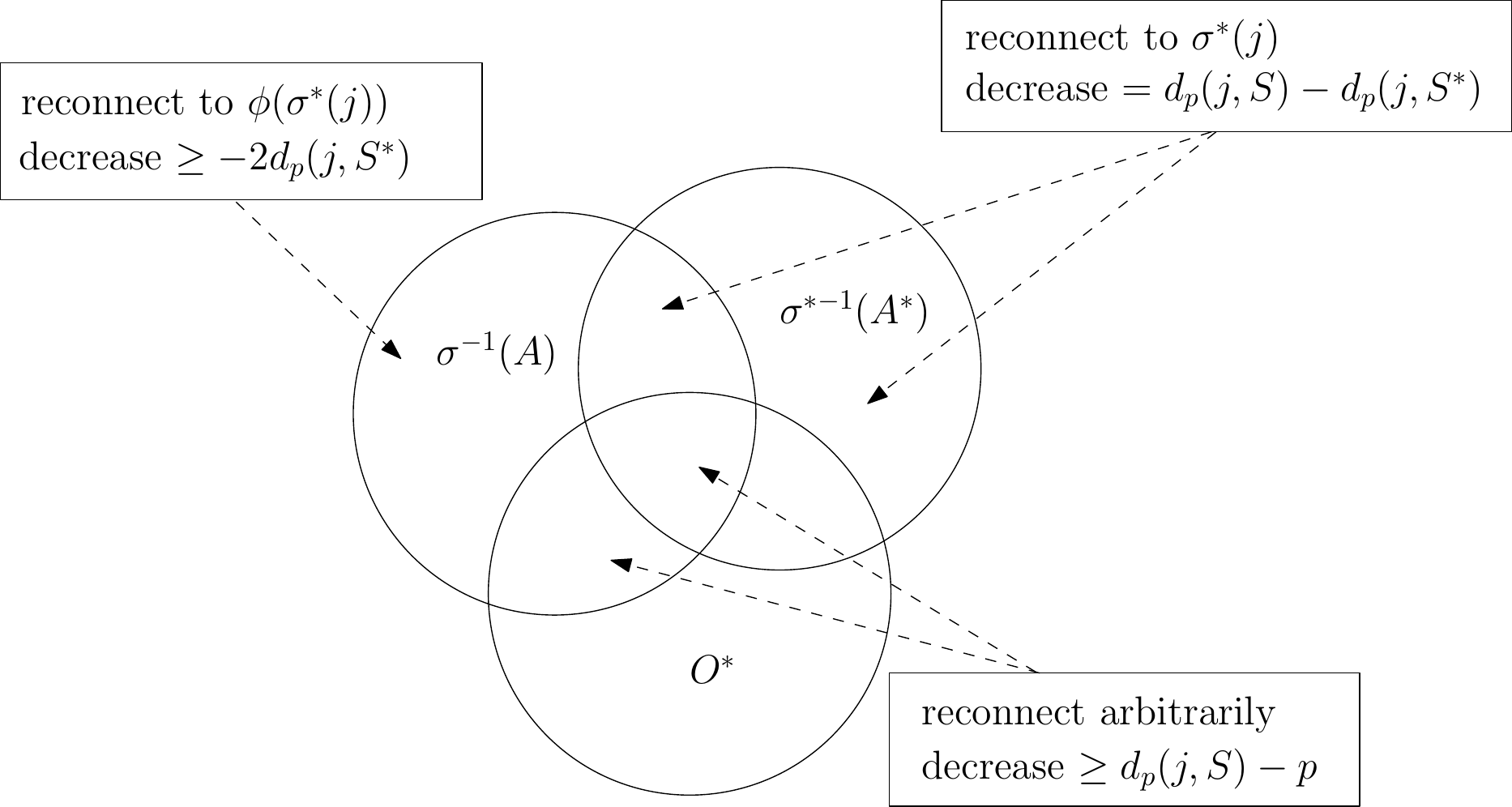}
 	\caption{How to reconnect points and the lower bound for the decrement in the connection cost for each point $j$, using the Venn diagram for the three sets $\sigma^{-1}(A), \sigma^{*-1}(A^*)$ and $O^*$.} \label{fig:reconnection}
 \end{figure}
 \begin{itemize}[leftmargin=*]
 	\item For a point $j \in \sigma^{*-1}(A^*) \setminus O^*$, we reconnect $j$ from $\sigma(j)$ to $\sigma^*(j) \in A^*$.  The decrease in the connection cost of $j$ is $d_p(j, \sigma(j)) - d_p(j, \sigma^*(j)) = d_p(j, S) - d_p(j, S^*)$.
 	\item For a point $j \in \sigma^{-1}(A) \setminus \sigma^{*-1}(A^*) \setminus O^*$, we reconnect $j$ to $\phi(\sigma^*(j))$.  Notice that $\sigma^*(j) \notin A^*$. By (P3), we have $\phi(\sigma^*(j)) \notin A$. Thus the connection is valid.  By triangle inequalities and definition of $\phi$, for every $j \in \sigma^{-1}(A) \setminus \sigma^{*-1}(A^*)\setminus O^*$, we have 
 	\begin{align*}
 		d_p(j, \phi(\sigma^*(j)))&\leq d_p(j, \sigma^*(j)) + d_p(\sigma^*(j), \phi(\sigma^*(j))) \leq d_p(j, \sigma^*(j)) + d_p(\sigma^*(j), \sigma(j))\\
 		& \leq d_p(j, \sigma^*(j)) + d_p(j, \sigma^*(j)) + d_p(j, \sigma(j)) = 2d_p(j, \sigma^*(j))+d_p(\sigma(j),j).
 	\end{align*}
 	So the decrease in the connection cost of $j$ is $d_p(j, \sigma(j)) - d_p(j, \phi(\sigma^*(j))) \geq -2d_p(j, \sigma^*(j)) = -2d_p(j, S^*)$.
 	\item For a point $j \in \sigma^{-1}(A) \cap O^*$, we reconnect $j$ arbitrarily, and the decrease in the connection cost of $j$ is at least $d_p(j, \sigma(j)) - p = d(j, S) - p$ as $p$ is the diameter of the metric $d_p$. 
 \end{itemize}
 
As the test swap operation is not $\rho$-efficient, we have 
%
\begin{align}
	  \quad\sum_{j \in \sigma^{*-1}(A^*) \setminus O^*}(d_p(j, S)-d_p(j, S^*)) - 2\sum_{j \in \sigma^{-1}(A) \setminus O^*}d_p(j, S^*) \nonumber&\\
	  + \sum_{j \in \sigma^{-1}(A)\cap O^*}(d_p(j, S) - p)&\leq |A|\rho. \label{eq:test-swap-witness}
\end{align}
Above, we used that  that $\sigma^{-1}(A)\setminus \sigma^{*-1}(A^*) \setminus O^*\subseteq \sigma^{-1}(A) \setminus O^*$. 


We now add up \eqref{eq:test-swap-witness} over all test swap operations.  We consider the expectation of the left side of the summation, over all random choices of $\beta$:
\begin{itemize}[leftmargin=*]
	\item The sum of the first term on the left side of \eqref{eq:test-swap-witness} is always exactly $\sum_{j \in C \setminus O^*} \big(d_p(j, S) - d_p(j, S^*)\big)$, due to (P1).
	\item Consider the expectation of the sum of the second term on the left side of \eqref{eq:test-swap-witness}. Since each $i \in S$ is swapped out in at most $1+\frac1\ell$ times in expectation by (P2), the expectation of the sum of the second term is at least $-\left(2 + \frac2\ell\right) \sum_{j \in C \setminus O^*} d_p(j, S^*)$.
	\item Consider the expectation of the sum of the third term on the left side of \eqref{eq:test-swap-witness}. Using that $d_p$ has diameter at most $p$, and (P2), the expectation is at least $\left(1+\frac1\ell\right)\sum_{j \in O^*} (d_p(j, S) -p) \geq \sum_{j \in O^*}d_p(j, S) - \left(1+\frac1\ell\right)|O^*|p$. We changed the coefficient before a non-negative term from $\left(1+\frac1\ell\right)$ to $1$ in the inequality; this is sufficient. 
\end{itemize}
Overall, the expectation of the sum of the left side of \eqref{eq:test-swap-witness} over all test swap operations is at least 
\begin{align*}
	&\sum_{j \in C \setminus O^*}\big(d_p(j, S) - d_p(j, S^*)\big) - \left(2+\frac2\ell\right) \sum_{j \in C \setminus O^*} d_p(j, S^*)+ \sum_{j \in O^*}d_p(j, S)  - \left(1+\frac1\ell\right)|O^*|p\\
	&= \sum_{j \in C}d_p(j, S)- \left(3+\frac2\ell\right) \sum _{j \in C\setminus O^*}d_p(j, S^*) - |O^*|\cdot \left(1+\frac1\ell\right)p \\
	&= \sum_{j \in C}d_p(j, S) - \sum_{j \in C} \min\left\{\left(3+\frac2\ell\right)d_p(j, S^*), \left(1+\frac1\ell\right)p\right\},
\end{align*}
where the last equality used the definition of $O^*$. 

The summation of the right side of \eqref{eq:test-swap-witness} over all test swaps is always exactly $k\rho $.  Therefore, we have 
\begin{align*}
	 \sum_{j \in C}d_p(j, S) - \sum_{j \in C} \min\left\{\big(3+\frac2\ell\big)d_p(j, S^*), \big(1+\frac1\ell\big)p\right\} \leq  k \rho .
\end{align*}
Rearranging the terms and replacing $ \sum_{j \in C}d_p(j, S)$ with $\cost_p(S)$ finish the proof of the theorem. 
\end{proof}

\section{Missing Proofs from Section \ref{subsec:BoundingApprox}} 
\claimofkmboundratio*
\begin{proof} 
    After the while loop, no $\frac{\epsilon \cdot \cost_p(S)}{k}$-efficient swaps can be performed. Applying Theorem~\ref{thm:local-search-apx-ratio} with $S^*$ being the optimum solution for the \kmo instance at the moment,  we have $\cost_{p}(S) \leq 0.1+\sum_{j \in C}\min\big\{\big(3+\frac2\ell\big)d_p(j, S^*), \big(1+\frac1\ell\big)p\big\} +k \cdot  \frac{\epsilon \cdot \cost_p(S)}{k} \leq \big(3+\frac2\ell\big)\opt +\big(1+\frac1\ell\big)zp + \epsilon\cdot \cost_p(S)$. Moving $\epsilon\cdot \cost_p(S)$ to the left side gives the claim. 
\end{proof}

\lemmaboundp*
\begin{proof}
	The statement holds at the beginning since $\opt = 0.1$ and $p \leq 0.1$.
	As $\opt$ can only increase during the algorithm, it suffices to prove the inequality at any moment after we run Step~\ref{step:increase-p}; this is the only step in which we increase $p$. 	We assume $z \geq 1$ since if $z = 0$ the lemma is trivial. 
	
	Focus on any moment before we run Step~\ref{step:increase-p}. We define $p^* > 0$ to be the real number such that $\big(1+\frac1\ell\big)(1+\gamma)zp^* = \big(3+\frac2\ell\big)\opt +\big(1+\frac1\ell\big)zp^*$. Then, if $p > p^*$, then the condition in Step~\ref{step:check-outlier} does not hold: Otherwise, we have $(1-\epsilon)\cost_p(S) > (1-\epsilon)\cdot \frac{1}{1-\epsilon}\big(1+\frac1\ell\big)(1+\gamma)zp \geq \big(3+\frac2\ell\big)\opt + \big(1+\frac1\ell\big)zp$, contradicting Claim~\ref{claim:ofkm-bound-ratio}. Since we assumed we are going to run Step~\ref{step:increase-p},  we have $p \leq p^*$.  So, after Step~\ref{step:increase-p}, we have $p \leq 2p^* = 2\cdot\frac{(3+2/\ell)\opt}{\gamma(1+1/\ell)z} = \frac{2(3\ell+2)\opt}{\gamma(\ell+1)z}$.
	%
\end{proof}

\section{Missing Proofs from Section \ref{subsec:Boundingrecourse}} 
\claimboundpusingoptprime*
\begin{proof}
	Again it suffices to show the inequality at any moment after we run Step~\ref{step:increase-p}. Suppose we just completed the while loop. Applying Theorem~\ref{thm:local-search-apx-ratio} with $S^*$ being the optimum solution for the current $k$-median instance with metric $d_p$,  we have $\cost_p(S) \leq \frac{1}{1-\epsilon}\big(3+\frac{2}{\ell}\big)\opt'$. If at the moment we have $p > \frac{1}{1-\epsilon}\big(3+\frac{2}{\ell}\big)\opt'$, then the condition for Step~\ref{step:check-outlier} will not be satisfied, even if $z = 0$. So, before we run Step~\ref{step:increase-p}, we must have $p \leq \frac{1}{1-\epsilon}\big(3+\frac{2}{\ell}\big)\opt'$. After the step, we have $p \leq \frac2{1-\epsilon}\big(3+\frac{2}{\ell}\big)\opt' = O(1)\cdot \opt'$. 
\end{proof}

\lemmakmbounddelta*
\begin{proof}
	We can show that $\Delta_\tau \leq O(1) \opt'_{\tau} \leq O(1)\opt'_{T}$ by applying Theorem~\ref{thm:local-search-apx-ratio} with $S^*$ being the optimum solution that defines $\opt'_{\tau}$. Thus it suffices to bound $\sum_{t = \tau+1}^{T}\Delta_t$.  
	
	Let $S^*$ be the optimum solution for the $k$-median instance with point set $\{j_1, j_2, \cdots, j_{T}\}$ and metric $d_\bfp$. We are only interested in points $j_{\tau+1}, j_{\tau+2}, \cdots, j_{T}$ in the analysis.   Fix any $i^* \in S^*$.  Let $\{j_{t_1}, j_{t_2}, \cdots, j_{t_s}\}$ be the set of points in $\{j_{\tau+1}, j_{\tau+2}, \cdots, j_{T}\}$ connected to $i^*$ in the solution $S^*$, where $\tau < t_1 < t_2 < \cdots < t_s \leq T \leq \tau$.  For notation convenience, we let $j'_r = j_{t_r}$ and $\Delta'_r = \Delta_{t_r}$ for every $r \in [s]$.  We now bound $\sum_{r = 1}^s\Delta'_{r}$.  We assume $s \geq 1$ since otherwise the quantity is $0$.
	
	We can bound $\Delta'_1$ by $\bfp$, and by Claim~\ref{claim:bound-p-using-opt'}, we have $\Delta'_1 \leq \bfp \leq O(1)\cdot \opt'_\tau \leq O(1) \cdot \opt'_{T}$.   Then we will bound $\Delta'_r$ for any integer $r \in [2, s]$.  Using Theorem~\ref{thm:local-search-apx-ratio}, we can show that at the beginning of time $t_r$ (or equivalently, at the end of time $t_r-1$), we have $\cost_\bfp(S) \leq O(1) \cdot \opt'_{t_r-1} \leq O(1) \cdot \opt'_{T}$. For the $S$, we have
\begin{align*} \textstyle
	\sum_{u = 1}^{r-1}\big(d_\bfp(j'_{u}, S) + d_\bfp(j'_{u}, i^*)\big) \leq O(1) \opt'_{T}.
\end{align*} 
The inequality holds since the summation for each of the two terms is at most $O(1) \opt'_{T}$.  So, there is at least one point $j'_u$ such that $d_\bfp(j'_u, S) + d_\bfp(j'_u, i^*) \leq O(1)\cdot \frac{\opt'_{T}}{r-1}$, implying $d_\bfp(i^*, S) \leq O(1)\cdot \frac{\opt'_{T}}{r-1}$.  Therefore, we have $\Delta'_r \leq d_\bfp(i^*, j'_r) + d_\bfp(i^*, S) \leq d_\bfp(i^*, j'_r) + O(1)\cdot \frac{\opt'_{T}}{r-1}$. Then
	\begin{align*} \textstyle
		\sum_{r = 1}^s\Delta'_r &\textstyle \leq O(1)\cdot \opt'_{T} + \sum_{r = 2}^s \left(d_\bfp(i^*,j'_r) + O(1)\cdot\frac{\opt'_{T}}{r-1}\right) \\
		 &\textstyle\leq \sum_{r=1}^s d_\bfp(i^*, j'_r) + O(\log s) \opt'_{T}=O(\log T) \opt'_{T} = O(\log n) \opt'_{T}.
	\end{align*}
	Considering all the $k$ medians $i^* \in S^*$ together, we have $\sum_{t = \tau+1}^{T} \Delta_t \leq O(k \log n)\opt'_{T}$.
\end{proof}

\helpersumba*
\begin{proof} Define $a_{H+1} = +\infty$.
	\begin{align*}
		\sum_{h = 1}^H \frac{b_h}{a_h} &= \sum_{h = 1}^H \frac{B_h - B_{h-1}}{ a_{h}}=\sum_{h = 1}^{H} B_h \left(\frac{1}{a_h} - \frac{1}{a_{h+1}}\right) = \sum_{h = 1}^{H}\frac{B_h}{a_h} \left(1 - \frac{a_h}{a_{h+1}}\right) \leq \alpha \sum_{h = 1}^{H}\left(1 - \frac{a_{h}}{a_{h+1}}\right)\\
		&=\alpha H- \alpha\sum_{h = 1}^{H-1}\frac{a_h}{a_{h+1}} \leq \alpha H- \alpha(H-1)\Big(\frac{a_1}{a_H}\Big)^{1/(H-1)} \\
		&=  \alpha(H-1)\left(1-e^{-\ln\frac{a_H}{a_1}/(H-1)}\right) + \alpha \leq \alpha(H-1)\ln\frac{a_H}{a_1}/(H-1) + \alpha = \alpha\left(\ln \frac{a_H}{a_1}+1\right).
	\end{align*}
	The inequality in the second line used the following fact: if the product of $H-1$ positive numbers is $\frac{a_1}{a_H}$, then their sum is minimized when they are equal.  The inequality in the third line used that $1-e^{-x} \leq x$ for every $x$.
\end{proof}

\section{Handling the $F = C$ Setting} \label{subsec:F=C}
When $F = C$,  a small issue with the analysis is that $\opt$ and $\opt'$ may decrease as the algorithm proceeds. However, it can only decrease by at most a factor of $2$  from a moment to any later moment. This holds due to the following fact:  If we have a star $(i, C')$ and any metric $d'$, we have $\min_{j^* \in C'} \sum_{j \in C'}d(j^*, j) \leq 2 \sum_{j \in C'} d'(i, j)$. That is, including additional medians in $F$ on top of $F=C$ can only save a factor of $2$.  

To address the issue, we define $\opt$ to be the optimum value of the current $\kmo$ instance. We define $\overline{\opt}$ at any moment of the algorithm to be the maximum $\opt$ we see until the moment.  Then at any moment of the algorithm, we have $\opt \leq \overline{\opt} \leq 2\opt$.   Moreover $\overline{\opt}$ can only increase as the algorithm proceeds.  Claim~\ref{claim:ofkm-bound-ratio} still holds, and Lemma~\ref{lemma:bound-p} holds with $\opt$ replaced by $\overline{\opt}$ or $2\opt$.  Then eventually we shall get a bifactor of $\left(\frac{1}{1-\epsilon}\big(1+\frac1\ell\big)(1+\gamma),\frac{1}{1-\epsilon}(3 + \frac2\ell)\big(1+\frac{4}{\gamma}\big)\right)$.

We can use the same trick to handle $\opt'$ in the analysis of the recourse. In this case, the factor of $2$ will be hidden in the $O(\cdot)$ notation and thus the recourse bound is not affected.  More precisely, we define $\overline{\opt}'$ to be the maximum $\opt'$ we see until the moment.  Then, we always have $\opt' \leq \overline{\opt}' \leq 2\opt'$, and $\overline{\opt}'$ can only increase.  Claim~\ref{claim:bound-p-using-opt'} still holds.   Then we fix a stage whose $p$ value is $\bfp$ and assume the stage starts in time $\tau$ and ends in time $\tau'$.  For every $t \in [\tau, \tau']$, define $\overline{\opt}'_t$ to be the value of $\overline{\opt}'$ at any moment that is in the stage and after Step~\ref{step:include-j-t} at the time $t$. Then Lemma~\ref{lemma:km-bound-delta} still holds and in the end we can bound the recourse by  $O\left(\frac{k^2\log n \log (nD)}{\epsilon}\right)$. Thus we proved Lemma~\ref{lemma:okm-main}.

\section{Additional Experiment Results for Incremental $z$ setting}
Here we include experiment results for the incremental $z$ setting where the number of outliers $z$ changes with time. We let $z$ grow uniformly as follows: we still focus on the first 10K data points, and for each time $t\in[1, 10000]$, we set the number of allowed outliers $z_t=\frac{t}{10000}\times 200$. So as more data points come, we allow to remove more outliers. All other parameters are the same as in section~\ref{sec:exp}: $\epsilon=0.05, \gamma=1, k\in\{10, 50, 100\}$, and available center locations $F=C$.

\begin{figure}[!htb]
  \centering
  \subfloat[\textsc{Skin}]{
    \includegraphics[width=0.32\textwidth]{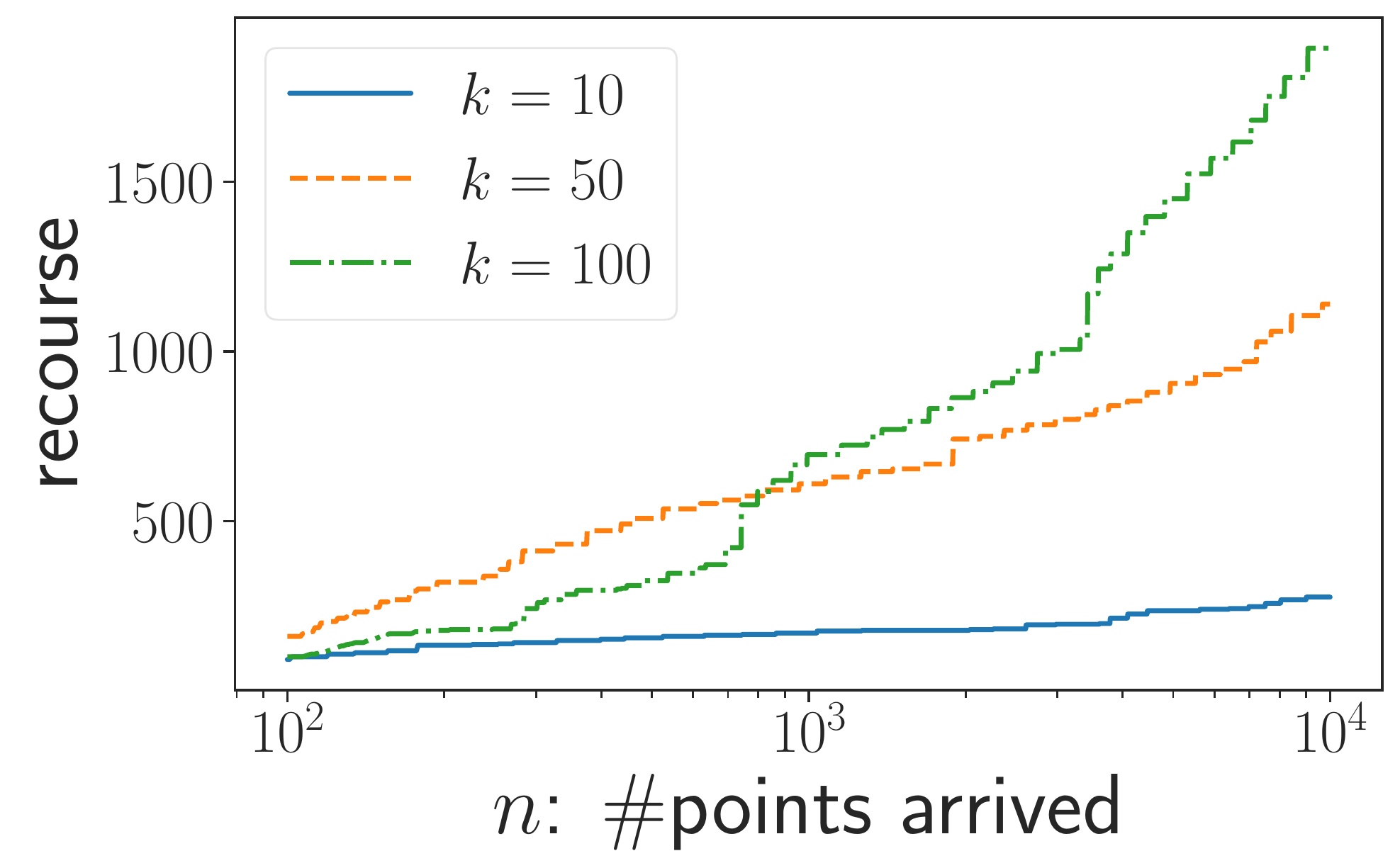}
  }
  ~
  \subfloat[\textsc{Covertype}]{
    \includegraphics[width=0.32\textwidth]{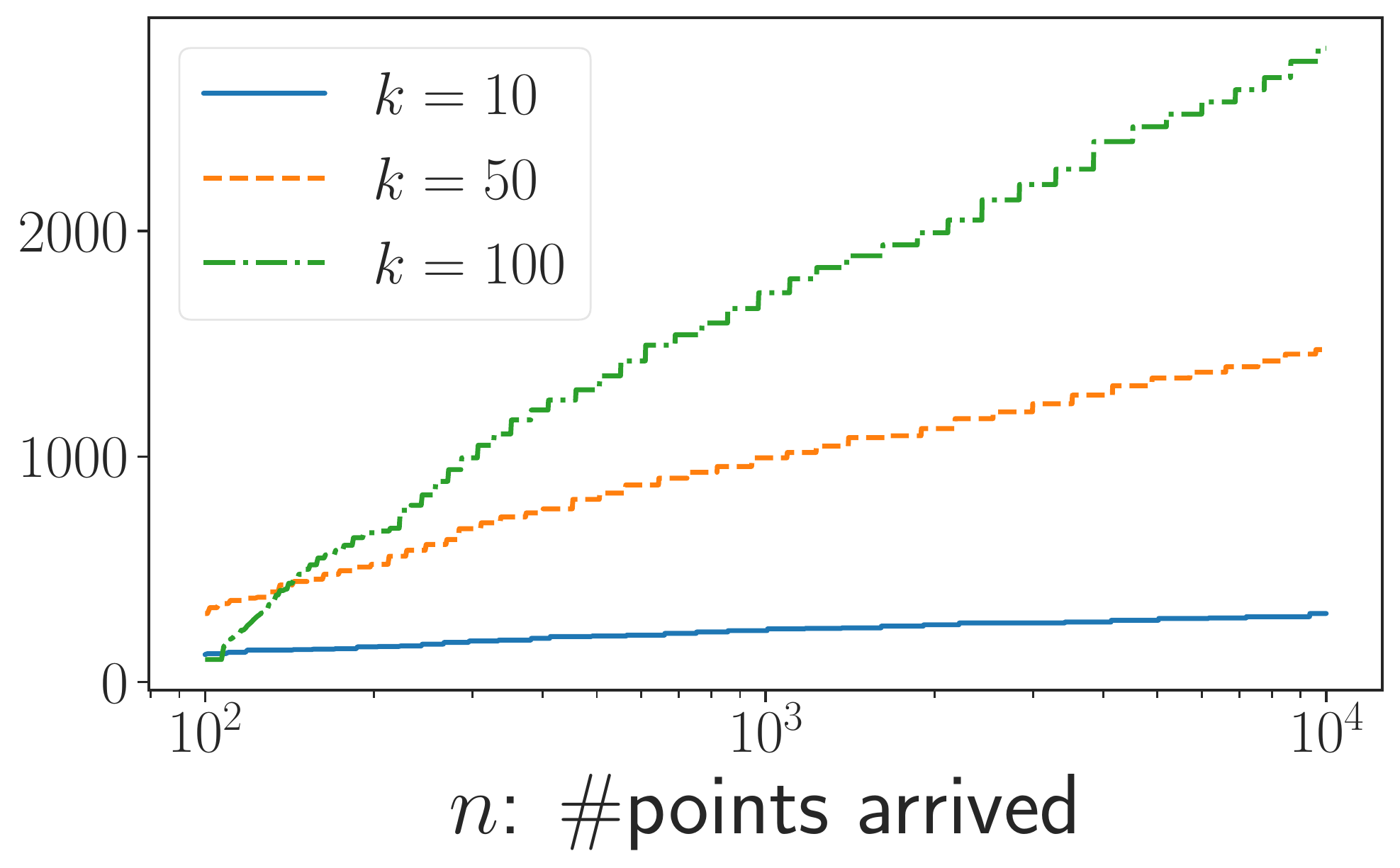}
  }
  ~
  \subfloat[\textsc{Letter}]{
    \includegraphics[width=0.32\textwidth]{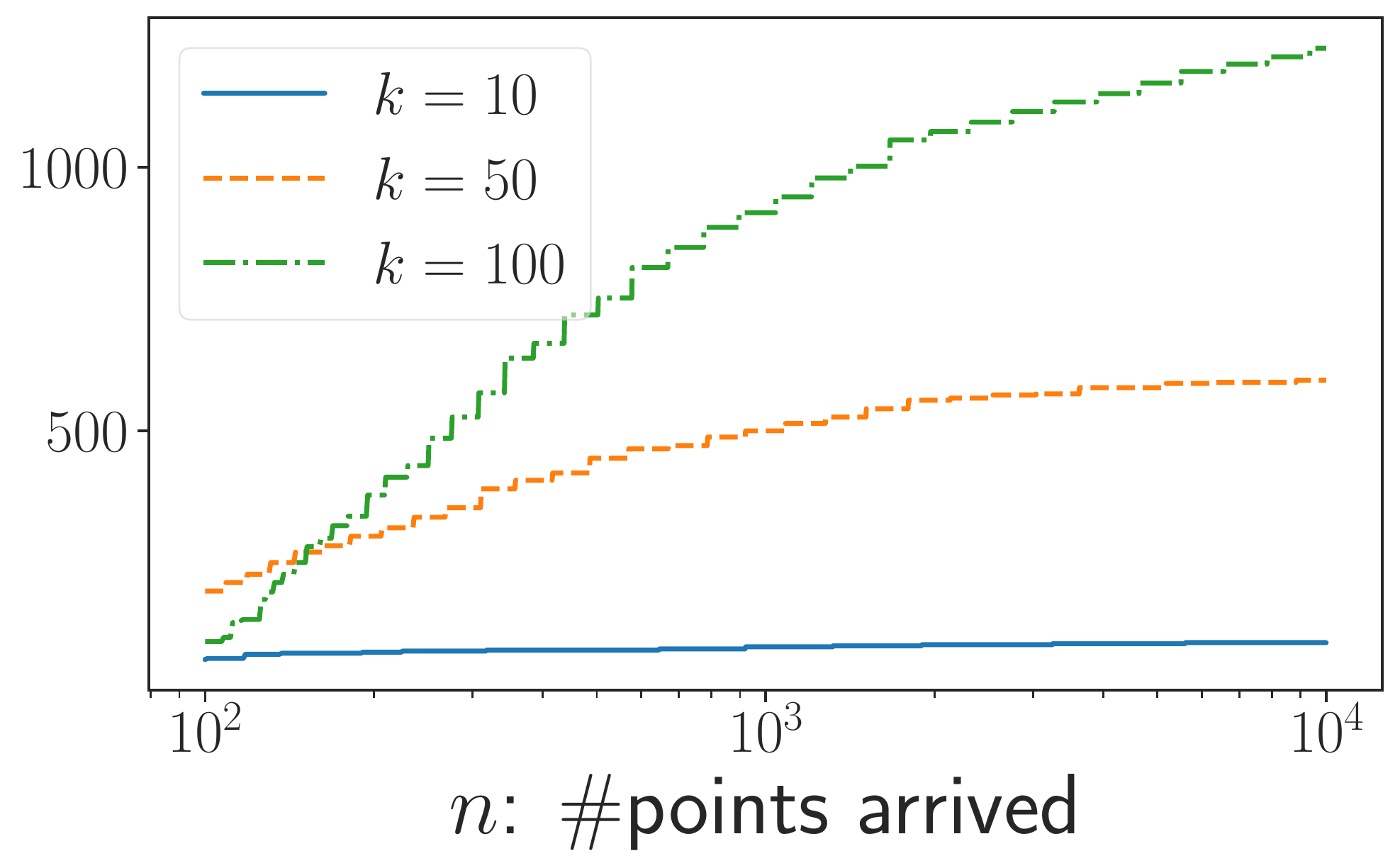}
  }
  \caption{Recourse over time. The $x$-axis is plotted in the log-scale}
  \label{fig:recourse-changing-z}
\end{figure}

Figure~\ref{fig:recourse-changing-z} shows how the total recourse grows with time. One can see that it's largely the same as that in Figure~\ref{fig:recourse}, exhibiting an $O(k)$ dependence on $k$ and $O(\log n)$ dependence on $n$. The major difference is that the recourse starts growing in very early time stages, while in Figure~\ref{fig:recourse} there's a longer warm-up phase. This is because in the setting of Figure~\ref{fig:recourse} the algorithm is allowed to remove roughly $4z=800$ outliers from the beginning, which means it can simply ignore the first few hundred arrived data points and conduct no local operations, i.e., no recourse. Figure~\ref{fig:cost-changing-z} shows the clustering quality on the three data sets. One can see that our algorithm still achieves very good approximation ratio (nearly 1) on all three data sets.

\begin{figure}[!htb]
  \centering
  \subfloat[\textsc{Skin}]{
    \includegraphics[width=0.32\textwidth]{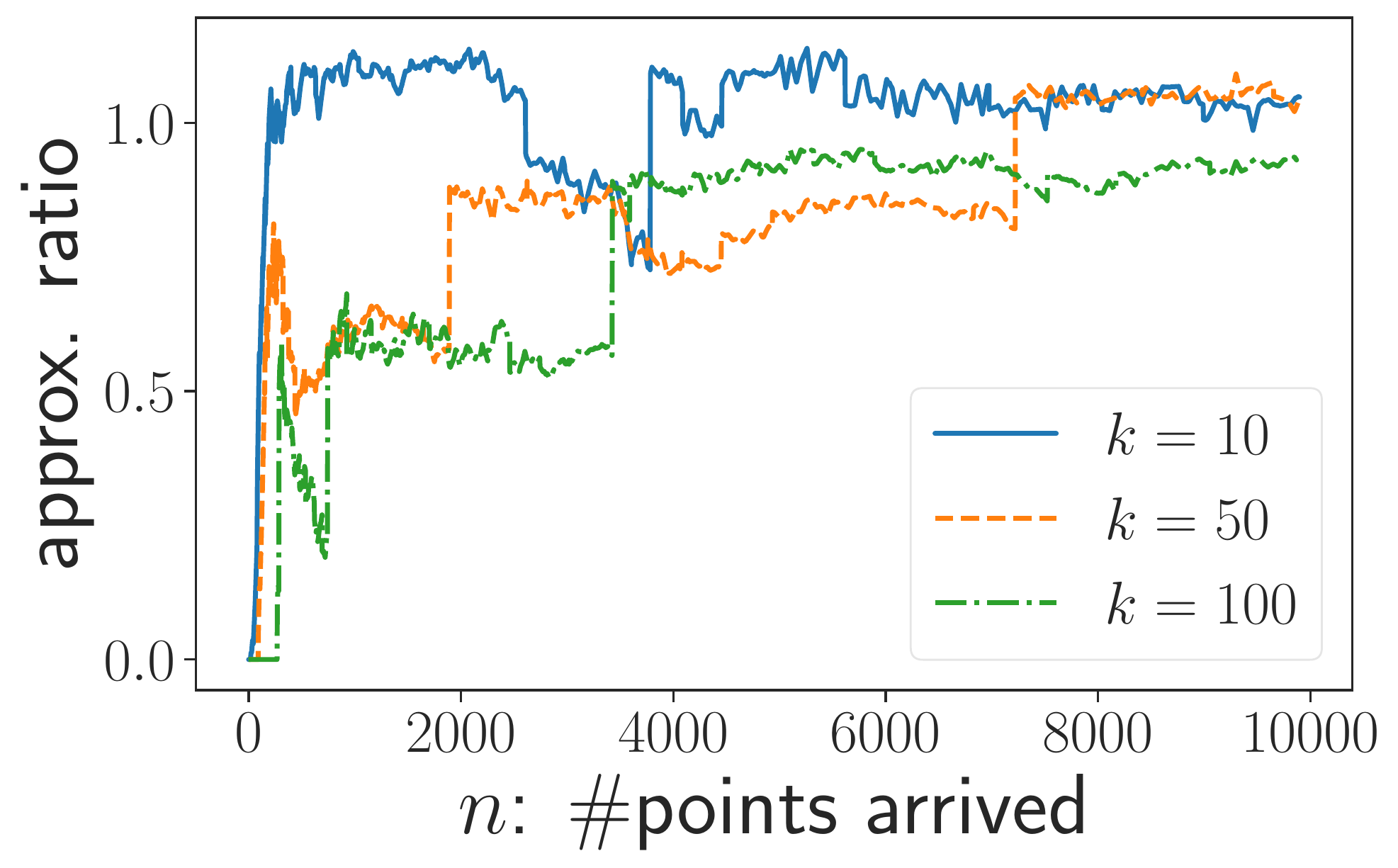}
  }
  \subfloat[\textsc{Covertype}]{
    \includegraphics[width=0.32\textwidth]{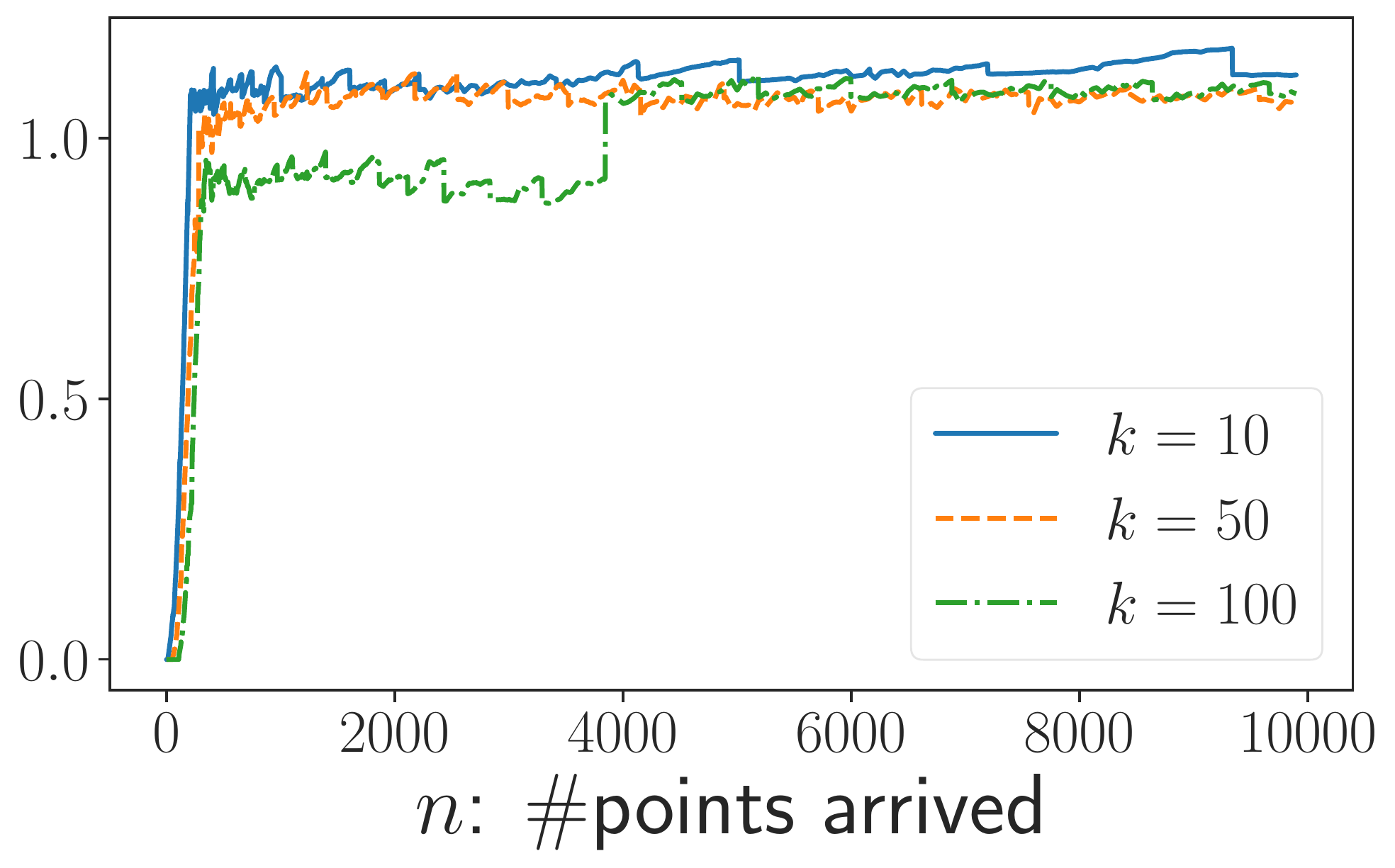}
  }
  \subfloat[\textsc{Letter}]{
    \includegraphics[width=0.32\textwidth]{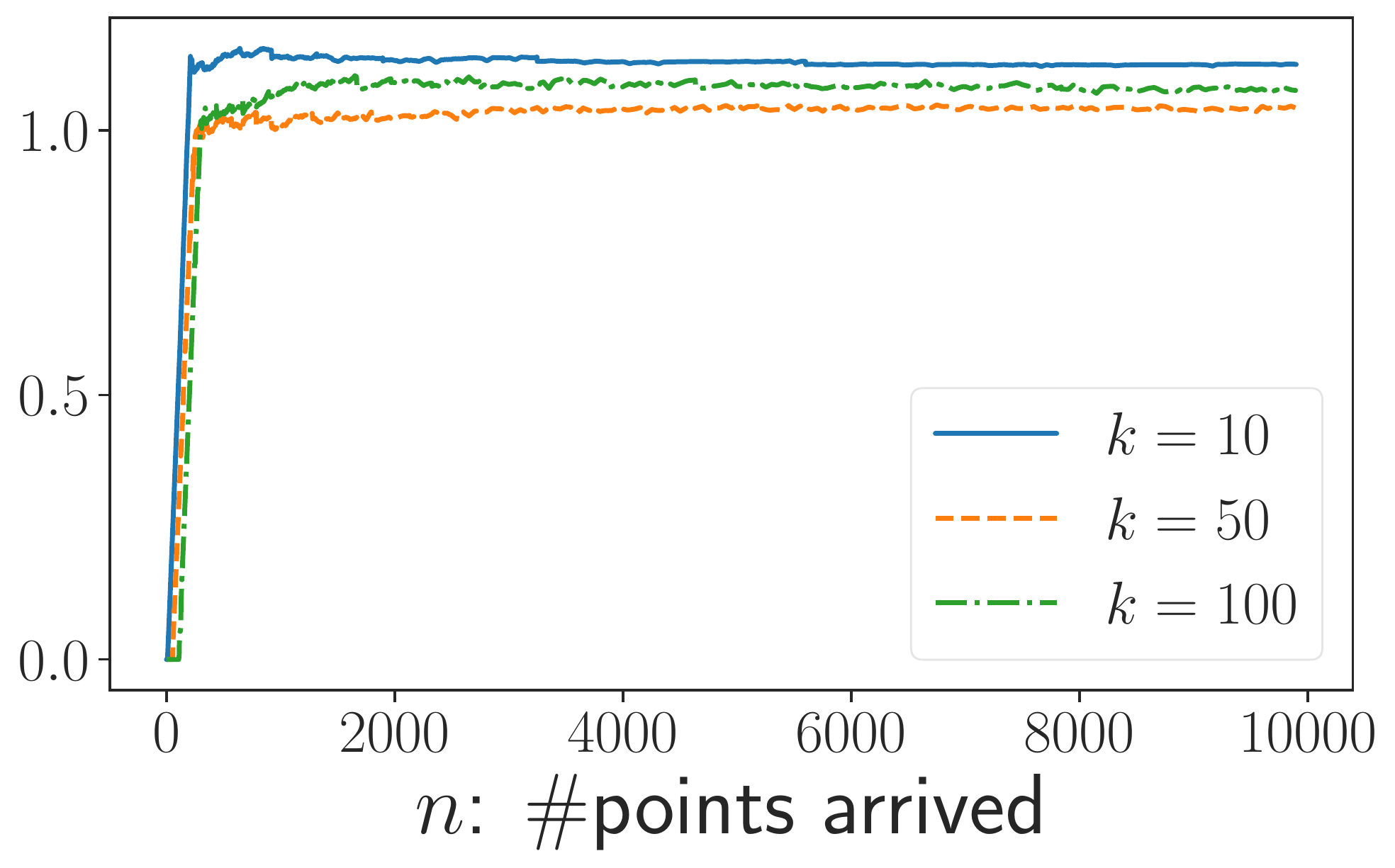}
  }
  \caption{Estimated approximation ratio over time.}
  \label{fig:cost-changing-z}
\end{figure}

\end{document}